\newcommand{\noopsort}[1]{}
\newcommand{\N}{\mathbb{N}} % Natural numbers
\newcommand{\supp}{\mathtt{supp}} %Support
\newcommand{\lang}{\mathcal{L}}
\newcommand{\tables}{\mathsf{Table_{fin}}}
\newcommand{\LStar}{\ensuremath{\mathtt{L}^{\!\star}}}
\newcolumntype{C}[1]{>{\centering\arraybackslash$}m{#1}<{$}}
\newcolumntype{L}[1]{>{\raggedleft\arraybackslash$}m{#1}<{$}}
\newcolumntype{R}[1]{>{\raggedright\arraybackslash$}m{#1}<{$}}
\tikzset{trlab/.style={font=\scriptsize,inner sep=1pt,outer sep=1pt}}
\newenvironment{automaton}[1][]{
%\begin{center}
\begin{tikzpicture}[
state/.append style={inner sep=0pt,outer sep=0pt,minimum size=3.5ex},%
initial text={},->,>=stealth',shorten >=1pt,node distance=9ex,semithick,#1]
}
{\end{tikzpicture}%\end{center}
}
\renewcommand{\gets}{\leftarrow}
\renewcommand{\gets}{\leftarrow}
\newcommand{\R}{\mathbb{R}}
\newcommand{\Ps}{\mathcal{P}}
\newcommand{\Psf}{\Ps_f}
\newcommand{\eword}{\varepsilon}
\newcommand{\suffixes}{\mathsf{suffixes}}
\newcommand{\rank}{\mathsf{rank}}
\newcommand{\Semi}{\mathbb{S}}
\newcommand{\PID}{\mathbb{P}}
\newcommand{\free}{V}
\newcommand{\size}{\mathsf{size}}
\newcommand{\reach}{\mathsf{reach}}
\newcommand{\obs}{\mathsf{obs}}
\newcommand{\cs}{\mathsf{cs}}
\newcommand{\row}{\mathsf{row}}
\newcommand{\srow}{\mathsf{srow}}
\newcommand{\aut}{\mathcal{A}}
\newcolumntype{C}[1]{>{\centering\arraybackslash$}m{#1}<{$}}
\newcolumntype{L}[1]{>{\raggedleft\arraybackslash$}m{#1}<{$}}
\newcolumntype{R}[1]{>{\raggedright\arraybackslash$}m{#1}<{$}}
\author{Gerco van Heerdt\inst{1} \and Clemens Kupke\inst{2}(\Letter) \and Jurriaan Rot\inst{1,3} \and Alexandra Silva\inst{1}}
\institute{
	University College London, United Kingdom \\
	\email{\{gerco.heerdt,alexandra.silva\}@ucl.ac.uk} \and
	University of Strathclyde, United Kingdom \\
	\email{clemens.kupke@strath.ac.uk} \and
	Radboud University, The Netherlands \\
	\email{jrot@cs.ru.nl}}
\title{Learning Weighted Automata \\ over Principal Ideal Domains\thanks{%
	The research leading to this work was partially funded by the European Union's Horizon 2020 research and innovation programme under the ERC Starting Grant ProFoundNet (grant code 679127) and the Marie Sk\l{}odowska-Curie Grant Agreement No.\ 795119, by the EPSRC Standard Grant CLeVer (EP/S028641/1), and by GCHQ via the VeTSS grant ``Automated black-box verification of networking systems'' (4207703/RFA 15845).
}}
\begin{document}

\maketitle
\begin{abstract}
In this paper, we study active learning algorithms for weighted automata over a semiring. We show that a variant of Angluin's seminal \LStar\ algorithm works when the semiring is a principal ideal domain, but not for general semirings such as the natural numbers.
\end{abstract}

\section{Introduction}

Angluin's seminal \LStar\ algorithm~\cite{angluin1987} for active learning of deterministic automata (DFAs) has been successfully used in many verification tasks, including in automatically building formal models of chips in bank cards or finding bugs in network protocols (see~\cite{frits-cacm,DBLP:conf/dagstuhl/HowarS16} for a broad overview of successful applications of active learning). While DFAs are expressive enough to capture interesting properties, certain verification tasks require more expressive models. This motivated several researchers to extend \LStar\ to other types of automata, notably Mealy machines~\cite{vilar1996,Shahbaz:2009:IMM:1693345.1693363}, register automata~\cite{DBLP:conf/cav/IsbernerHS15,DBLP:journals/sigsoft/MuesHLKR16,AartsFKV15}, and nominal automata~\cite{DBLP:conf/popl/MoermanS0KS17}.

Weighted finite automata (WFAs) are an important model made popular due to their applicability in image processing and speech recognition tasks~\cite{culik1993image,DBLP:journals/corr/abs-cs-0503077}. The model is prevalent in other areas, including bioinformatics~\cite{DBLP:conf/icml/AllauzenMT08} and formal verification~\cite{DBLP:conf/atva/AminofKL11}. Passive learning algorithms and associated complexity results have appeared in the literature (see e.g.~\cite{DBLP:conf/nips/BalleM12} for an overview), whereas active learning has been less studied~\cite{DBLP:conf/cai/BalleM15,Bergadano}. Furthermore, the existing  learning algorithms, both passive and active, have been developed assuming the weights in the automaton are drawn from a field, such as the real numbers.\footnote{Balle and Mohri~\cite{DBLP:conf/cai/BalleM15} define WFAs generically over a semiring but then restrict to fields from Section 3 onwards as they present an overview of existing learning algorithms.} To the best of our knowledge, no learning algorithms, whether passive or active, have been developed for WFAs in which the weights are drawn from a general semiring.

In this paper, we explore {\em active learning} for WFAs over a general semiring. The main contributions of the paper are as follows: 

\begin{enumerate}
	\item
		We introduce a weighted variant of \LStar\ parametric on an arbitrary semiring, together with sufficient conditions for termination (Section~\ref{sec:algo}).
	\item
		We show that for general semirings our algorithm might not terminate. In particular, if the semiring is the natural numbers, one of the steps of the algorithm might not converge (Section~\ref{sec:naturals}).
	\item
		We prove that the algorithm terminates if the semiring is a {\em principal ideal domain}, covering the known case 
		of fields, but also the integers. This yields the first active learning algorithm for WFAs over the integers (Section~\ref{sec:PIDs}).
\end{enumerate}

We start in \Cref{sec:overview} by explaining the learning algorithm for WFAs over the reals and pointing out the challenges in extending it to arbitrary semirings.

\section{Overview of the Approach}\label{sec:overview}

In this section, we give an overview of the work developed in the paper through examples. We start by informally explaining the general
algorithm for learning weighted automata that we introduce in Section~\ref{sec:algo}, for the case
where the semiring is a field. More specifically, for simplicity we consider the field of real numbers throughout this section. Later in the section, we illustrate why this algorithm does not work for an arbitrary semiring. 

Angluin's \LStar\ algorithm provides a procedure to learn the minimal DFA accepting a certain (unknown) regular language.
In the weighted variant we will introduce in Section~\ref{sec:algo}, for the specific case of the field of real numbers, the algorithm produces the minimal WFA accepting a weighted rational language (or formal power series) $\lang \colon A^* \to \R$.

A WFA over $\R$ consists of a set of states, a linear combination of initial states, a transition function that for each state and input symbol produces a linear combination of successor states, and an output value in $\R$ for each state (Definition~\ref{def:wfa}).
As an example, consider the WFA over $A = \{a\}$ below. 
\[
	\begin{automaton}[baseline=-.5ex]
		\node[initial,state] (q0) {$q_0/2$};
		\node[state,right of=q0] (q1) {$q_1/3$};
		\path (q0) edge node[trlab,above]{$a,1$} (q1)
		(q0) edge[loop above] node[trlab,above]{$a,1$} (q0)
		(q1) edge[loop above] node[trlab,above]{$a,2$} (q1);
	\end{automaton}
\]
Here $q_0$ is the only initial state, with weight 1, as indicated by the arrow into it that has no origin.
When reading $a$, $q_0$ transitions with weight 1 to itself and also with weight 1 to $q_1$; $q_1$ transitions with weight 2 just to itself.
The output of $q_0$ is $2$ and the output of $q_1$ is $3$.

The language of a WFA is determined by letting it read a given word and determining the final output according to the weights and outputs assigned to individual states.
More precisely, suppose we want to read the word $aaa$ in the example WFA above.
Initially, $q_0$ is assigned weight 1 and $q_1$ weight 0.
Processing the first $a$ then leads to $q_0$ retaining weight 1, as it has a self-loop with weight 1, and $q_1$ obtaining weight 1 as well.
With the next $a$, the weight of $q_0$ still remains $1$, but the weight of $q_1$ doubles due to its self-loop of weight 1 and is added to the weight 1 coming from $q_0$, leading to a total of $3$.
Similarly, after the last $a$ the weights are $1$ for $q_0$ and $7$ for $q_1$.
Since $q_0$ has output $2$ and $q_1$ output $3$, the final result is $2 \cdot 1 + 3 \cdot 7 = 23$.

The learning algorithm assumes access to a \emph{teacher} (sometimes also called \emph{oracle}), who answers two types of queries:
\begin{itemize}
	\item
		\emph{membership queries}, consisting of a single word $w \in A^*$, to which the teacher replies with a weight $\lang(w) \in \R$;
	\item
		\emph{equivalence queries}, consisting of a hypothesis WFA $\aut$, to which the teacher replies \textbf{yes} if its language $\lang_\aut$ equals the target language $\lang$ and \textbf{no} otherwise, providing a counterexample $w \in A^*$ such that $\lang(w) \neq \lang_\aut(w)$.
\end{itemize}
In practice, membership queries are often easily implemented by interacting with the system one wants to model the behaviour of.
However, equivalence queries are more complicated---as the perfect teacher does not exist and the target automaton is not known they are commonly approximated by testing.
Such testing can however be done exhaustively if a bound on the number of states of the target automaton is known.
Equivalence queries can also be implemented exactly when learning algorithms are being compared experimentally on generated automata whose languages form the targets.
In this case, standard methods for language equivalence, such as the ones based on bisimulations~\cite{boreale2009}, can be used.

The learning algorithm incrementally builds an {\em observation table}, which at each stage contains partial information about the language $\lang$ determined by two finite sets $S, E \subseteq A^*$.
The algorithm fills the table through membership queries.
As an example, and to set notation, consider the following table (over $A=\{a\}$).
%\begin{center}%\vspace{-.8cm}
	\begin{tabular}{m{.4\linewidth}m{.5\linewidth}}
		\centering
		\vspace{-1.0cm} % G: not sure why I needed this...
		\begin{tabular}{r r c c c}
			& & \multicolumn{3}{c}{$\overbracket[.8pt][2pt]{\rule{7ex}{0pt}}^{\displaystyle E}$} \\[-.05cm]
			& & \multicolumn{1}{|c}{$\eword$} & $a$ & $aa$ \\
			\cline{2-5}
			\ldelim[{2}{2em}[$\begin{array}{@{}c@{}}S\\ \vspace{-.45cm}\mbox{}\end{array}$] & $\eword$ & \multicolumn{1}{|c}{0} & 1 & 3 \\
			\ldelim[{2}{2.9em}[$\begin{array}{@{}c@{}}\mbox{}\\S \cdot A\end{array}$]	& $a$ & \multicolumn{1}{|c}{1} & 3 & 7 \\[.04cm]
			\cline{2-5}
			 & $aa$ & \multicolumn{1}{|c}{3} & 7 & 15
		\end{tabular} &
		{
			\begin{gather*}
				\row \colon S \to \R^E \\
				\row(u)(v) = \lang(uv) \\[1ex]
				\srow \colon S \cdot A \to \R^E \\
				\srow(ua)(v) = \lang(uav)
			\end{gather*}
		}
	\end{tabular}
%\end{center}
%\begin{tightcenter}
%	\begin{tabular}{m{.4\linewidth}m{.5\linewidth}}
%		\centering
%		\begin{tabular}{L{4em} L{1ex} | C{.5ex}C{.5ex}C{.5ex}@{}m{0pt}@{}}
%			\multicolumn{5}{c}{$\hspace{7ex}\overbracket[.8pt][2pt]{\rule{11ex}{0pt}}^{\displaystyle E}$} \\
%			& & \eword &  a & aa 
%			\\
%			\cline{2-5}
%			\ldelim[{2}{2.5em}[$\begin{array}{@{}c@{}}S\\ \cup\\ S \cdot A\end{array}$] & \eword & \lmark{r1} 0 & 1 & 3 \rmark{r1}  & 
%			\\[1ex]
%			\cline{2-5}
%			& a &  \lmark{r2}  1 & 3 & 7 \rmark{r2} &  
%		\end{tabular}
%
%		\smallskip
%		$S,E \subseteq A^*$
%		&
%		{
%			\begin{gather*}
%				\invlmark{r4} \row \colon S \cup S\cdot A \to \R^E \invrmark{r4} \\[3ex]
%				\begin{aligned}
%					\row(u)(v) = \lang(uv)
%				\end{aligned}
%			\end{gather*}
%		}
%	\end{tabular}
%	%
%	\begin{tikzpicture}[overlay,remember picture,>=stealth']
%		\path (r1.east) edge node[trlab,left] {} (r4.west);		
%		\path (r2.east) edge[-{Stealth[length=5pt,width=8pt]}] node[trlab,right] {} (r4.west);	
%	\end{tikzpicture}
%\end{tightcenter}
%

This table indicates that $\lang$ assigns $0$ to $\eword$, $1$ to $a$, $3$ to $aa$, $7$ to $aaa$, and $15$ to $aaaa$.
For instance, we see that $\row(a)(aa) = \srow(aa)(a) = 7$.
Since $\row$ and $\srow$ are fully determined by the language $\lang$, we will refer to an observation table as a pair $(S,E)$, leaving the language $\lang$ implicit.

If the observation table $(S,E)$ satisfies certain properties described below, then it represents a WFA $(S,\delta,i,o)$, called the \emph{hypothesis}, as follows:
\begin{itemize}
	\item
		$\delta \colon S \to (\R^S)^A$ is a linear map defined by choosing for $\delta(s)(a)$ a linear combination over $S$ of which the rows evaluate to $\srow(sa)$;
	\item
		$i \colon S \to \R$ is the initial weight map defined as $i(\eword) = 1$ and $i(s) = 0$ for $s \ne \eword$;
	\item
		$o \colon S \to \R$ is the output weight map defined as $o(s) = \row(s)(\eword)$.
\end{itemize}
For this to be well-defined, we need to have $\eword \in S$ (for the initial weights) and $\eword \in E$ (for the output weights), and for the transition function there is a crucial property of the table that needs to hold: closedness.
In the weighted setting, a table is closed if for all $t \in S \cdot A$, there exist $r_s \in \R$ for all $s \in S$ such that
\[
	\srow(t) = \sum_{s \in S} r_s \cdot \row(s).
\]
If this is not the case for a given $t \in S \cdot A$, the algorithm adds $t$ to $S$.
The table is repeatedly extended in this manner until it is closed.
The algorithm then constructs a hypothesis, using the closedness witnesses to determine transitions, and poses an equivalence query to the teacher.
It terminates when the answer is \textbf{yes}; otherwise it extends the table with the counterexample provided by adding all its suffixes to $E$,
and the procedure continues by closing again the resulting table. 
In the next subsection we describe
the algorithm through an example. 

%Here the semimodule structure on the rows is defined pointwise.
%Each time the algorithm constructs an automaton, it poses an equivalence query to the teacher.
%It terminates when the answer is \textbf{yes}; otherwise it extends the table with the counterexample provided.

\begin{remark}
	The original \LStar\ algorithm requires a second property to construct a hypothesis, called \emph{consistency}.
	Consistency is difficult to check in extended settings, so the present paper is based on a variant of the algorithm inspired by Maler and Pnueli~\cite{maler1995} where only closedness is checked and counterexamples are handled differently.
%	\jurriaan{Isn't this also  inspired by~\cite{BolligHKL09}?}\gerco{That's a later paper. It's less inspired by that, because they still do a bit of consistency. We will mention it at the NFA example.}
	See~\cite{vanheerdt2017} for an overview of consistency in different settings.
\end{remark}

\subsection{Example: Learning a Weighted Language over the Reals}\label{sec:execution_example_original}

Throughout this section we consider the following weighted language:
	\begin{align*}
		\lang \colon \{a\}^* \to \R &
			&
			\lang(a^j) = 2^j - 1.
	\end{align*}
The minimal WFA recognising it has 2 states.
We will illustrate how the weighted variant of Angluin's algorithm recovers this WFA.

%\tightpar{\algstep{1}.}
We start from $S=E=\{\eword\}$, and fill the entries of the table on the left below by asking membership queries for $\eword$ and $a$.
The table is not closed and hence we build the table on its right, adding the membership result for $aa$.
The resulting table is closed, as $\srow(aa) = 3 \cdot \row(a)$, so we construct the hypothesis $\aut_1$.
\begin{center}
%\vspace{-.8cm}
	\begin{tabular}{m{.4\linewidth}m{.4\linewidth}}
		\vspace{-.5cm}
		\begin{tabular}{l | c}
			& $\eword$ \\
			\hline
			$\eword$ & 0 \\
			\hline
			$a$ & 1
		\end{tabular}
		\qquad\qquad\qquad 
		\begin{tabular}{r | c}
			& $\eword$ \\
			\hline
			$\eword$ & 0 \\
			$a$ & 1 \\
			\hline
			$aa$ & 3
		\end{tabular}
		&
		\begin{gather*}
			\aut_1 =
			\begin{automaton}[baseline=-.5ex]
				\node[initial,state] (q0) {$q_0/0$};
				\node[state,right of=q0] (q1)  {$q_1/1$};
				\path (q0) edge node[trlab,above]{$a,1$} (q1)
				(q1) edge[loop right] node[trlab,right]{$a,3$} (q1); 	
			\end{automaton}	\\
			q_0 = \eword \\
			q_1 = a
		\end{gather*}
	\end{tabular}
\end{center}
The teacher replies \textbf{no} and gives the counterexample $aaa$, which is assigned $9$ by the hypothesis automaton $\aut_1$ but $7$ in the language.
Therefore, we extend $E \gets E \cup \{a,aa,aaa\}$.
%
%\tightpar{\algstep{2}.}
\label{alg:step2}
The table becomes the one below. It is closed, as $\srow(aa) = 3 \cdot \row(a) - 2 \cdot \row(\eword)$, so we construct a new hypothesis $\aut_2$.
\begin{center}
	\begin{tabular}{m{.4\linewidth}m{.4\linewidth}}
		\centering
		\begin{tabular}{r | c c c c}
			& $\eword$ & $a$ & $aa$ & $aaa$ \\
			\hline
			$\eword$ & 0 & 1 & 3 & 7 \\
			$a$ & 1 & 3 & 7 & 15 \\
			\hline
			$aa$ & 3 & 7 & 15 & 31
		\end{tabular} &
		\centering
		\begin{gather*}
			\aut_2 = 
			\hspace{-10pt}
			\begin{gathered}
				\begin{automaton}
					\node[initial,state] (q0) {$q_0/0$};	
					\node[state,right of=q0] (q1) {$q_1/1$};	
					\path 
					(q1) edge[loop right] node[trlab,right] {$a,3$} (q1)
					(q0) edge[bend right] node[trlab,below] {$a,1$} (q1)
					(q1) edge[bend right] node[trlab,above] {$a,-2$} (q0);
				\end{automaton}
			\end{gathered}	
		\end{gather*}
	\end{tabular}
\end{center}
The teacher replies {\bf yes} because $\aut_2$ accepts the intended language assigning $2^{j}-1 \in \R$ to the word $a^j$,
and the algorithm terminates with the correct automaton.

\subsection{Learning Weighted Languages over Arbitrary Semirings}\label{sec:overview-arbitrary}
Consider now the same language as above, but represented as a map over the semiring of natural numbers $\lang \colon \{a\}^* \to \N$
instead of a map $\lang \colon \{a\}^* \to \R$ over the reals. Accordingly, we consider a variant
of the learning algorithm over the semiring $\N$ rather than the algorithm over $\R$ described above. 
For the first part, the run of the algorithm for $\N$ is the same as above, but after receiving the counterexample we can no longer observe that $\srow(aa) = 3 \cdot \row(a) - 2 \cdot \row(\eword)$, since $-2 \not\in \N$.
In fact, there are no $m, n \in \N$ such that $\srow(aa) = m \cdot \row(\eword) + n \cdot \row(a)$.
To see this, consider the first two columns in the table and note that $\frac{3}{7}$ is bigger than $\frac{0}{1} = 0$ and $\frac{1}{3}$, so it cannot be obtained as a linear combination of the latter two using natural numbers.
We thus have a closedness defect and update $S \gets S \cup \{aa\}$, leading to the table below.
\begin{center}
	\begin{tabular}{r | c c c c}
		& $\eword$ & $a$ & $aa$ & $aaa$ \\
		\hline
		$\eword$ & 0 & 1 & 3 & 7 \\
		$a$ & 1 & 3 & 7 & 15 \\
		$aa$ & 3 & 7 & 15 & 31 \\
		\hline
		$aaa$ & 7 & 15 & 31 & 63
	\end{tabular}
\end{center}
Again, the table is not closed, since $\frac{7}{15} > \frac{3}{7}$. In fact, these closedness defects continue appearing indefinitely, leading to non-termination of the algorithm.
This is shown formally in Section~\ref{sec:naturals}.

Note, however, that there does exist a WFA over $\N$ accepting this language:
\begin{equation}\label{eq:famous-automaton-over-N}
	\begin{automaton}[baseline=-.5ex]
		\node[initial,state] (q0) {$q_0/0$};
		\node[state,right of=q0] (q1)  {$q_1/1$};
		\path (q0) edge node[trlab,above]{$a,1$} (q1)
		(q0) edge[loop above] node[trlab,above]{$a,1$} (q0)
		(q1) edge[loop above] node[trlab,above]{$a,2$} (q1); 	
	\end{automaton}
\end{equation}

The reason that the algorithm cannot find the correct automaton is closely related to the algebraic structure induced by the semiring.
In the case of the reals, the algebras are vector spaces and the closedness checks induce increases in the dimension of the hypothesis WFA, which in turn cannot exceed the dimension of the minimal one for the language.
In the case of commutative monoids, the algebras for the natural numbers, the notion of dimension does not exist and unfortunately the algorithm does not terminate.
In Section~\ref{sec:PIDs} we show that one can get around this problem for a class of semirings which includes the integers.

We mentioned earlier that during experimental evaluation the target WFA is known, and equivalence queries may be implemented via standard language equivalence methods.
A further issue with arbitrary semirings is that language equivalence can be undecidable; that is the case, e.g., for the tropical semiring.

In Section~\ref{sec:prelims} we recall basic definitions used throughout the paper, after which Section~\ref{sec:algo} introduces our general algorithm with its (parameterised) termination proof of Theorem~\ref{thm:termination}.
We then proceed to prove non-termination of the example discussed above over the natural numbers in Section~\ref{sec:naturals} before instantiating our algorithm to PIDs in Section~\ref{sec:PIDs} and showing that it terminates in Theorem~\ref{thm:pids}.
We conclude with a discussion of related and future work in Section~\ref{sec:discussion}.

\section{Preliminaries}\label{sec:prelims}

Throughout this paper we fix a semiring\footnote{%
	Rings and semirings considered in this paper are taken to be unital.
} $\Semi$ and a finite alphabet $A$.
We start with basic definitions related to semimodules and  weighted languages. 

\begin{definition}[Semimodule]
	A \emph{(left) semimodule} $M$ over $\Semi$ consists of a monoid structure on $M$, written using $+$ as the operation and $0$ as the unit, together with a scalar multiplication map ${\cdot} \colon \Semi \times M \to M$ such that:
	\begin{align*}
		s \cdot 0_M &
			= 0_M &
			0_\Semi \cdot m &
			= 0_M &
			1 \cdot m &
			= m \\
		s \cdot (m + n) &
			= s \cdot m + s \cdot n &
			(s + r) \cdot m &
			= s \cdot m + r \cdot m &
			(sr) \cdot m &
			= s \cdot (r \cdot m).
	\end{align*}
	When the semiring is in fact a ring, we speak of a \emph{module} rather than a semimodule.
	In the case of a field, the concept instantiates to a vector space.
\end{definition}

As an example, commutative monoids are the semimodules over the semiring of natural numbers.
Any semiring forms a semimodule over itself by instantiating the scalar multiplication map to the internal multiplication.
If $X$ is any set and $M$ is a semimodule, then $M^X$ with pointwise operations also forms a semimodule.
A similar semimodule is the \emph{free semimodule} over $X$, which differs from $M^X$ in that it fixes $M$ to be $\Semi$ and requires its elements to have \emph{finite support}.
This enables an important operation called \emph{linearisation}.

\begin{definition}[Free semimodule]
	The \emph{free semimodule} over a set $X$ is given by the set
	\[
		\free(X) = \{f \colon X \to \Semi \mid \supp(f)\text{ is finite}\}
	\]
	with pointwise operations.
	Here $\supp(f) = \{x \in X \mid f(x) \ne 0\}$.
	We sometimes identify the elements of $\free(X)$ with formal sums over $X$.
	Any semimodule isomorphic to $\free(X)$ for some set $X$ is called free.
\end{definition}
If $X$ is a finite set, then $\free(X) = \Semi^X$.
We now define {\em linearisation} of a function into a semimodule, which uniquely extends it to a semimodule homomorphism, witnessing the fact that $\free(X)$ is free.

\begin{definition}[Linearisation]\label{def:linearisation}
	Given a set $X$, a semimodule $M$, and a function $f \colon X \to M$, we define the \emph{linearisation} of $f$ as the semimodule homomorphism $f^\sharp \colon \free(X) \to M$ given by
	\[
		f^\sharp(\alpha) = \sum_{x \in X} \alpha(x) \cdot f(x).
	\]
	The $(-)^\sharp$ operation has an inverse that maps a semimodule homomorphism $g \colon \free(X) \to M$ to the function $g^\dagger \colon X \to M$ given by
	\[
		g^\dagger(x) = g(\partial_x), \qquad\qquad \partial_x(y) = \begin{cases}
			1 &
				\text{if $y = x$} \\
			0 &
				\text{if $y \ne x$}.
		\end{cases}
	\]
\end{definition}

We proceed with the definition of WFAs and their languages.

\begin{definition}[WFA]\label{def:wfa}
	A \emph{weighted finite automaton (WFA)} over $\Semi$ is a tuple $(Q, \delta, i, o)$, where $Q$ is a finite set, $\delta \colon Q \to (\Semi^Q)^A$, and $i, o \colon Q \to \Semi$.
\end{definition}
A \emph{weighted language} (or just \emph{language}) over $\Semi$ is a function $A^* \to \Semi$.
To define the language accepted by a WFA $\aut = (Q, \delta, i, o)$,
we first introduce the notions of \emph{observability map} $\obs_\aut \colon \free(Q) \to \Semi^{A^*}$
and \emph{reachability map} $\reach_\aut \colon \free(A^*) \to \free(Q)$
as the semimodule homomorphisms given by
	\begin{align*}
		\reach_\aut^\dagger(\eword) &
			= i &
			\obs_\aut(m)(\eword) &
			= o^\sharp(m) \\
		\reach_\aut^\dagger(ua) &
			= \delta^\sharp(\reach_\aut^\dagger(u))(a) &
			\obs_\aut(m)(au) &
			= \obs_\aut(\delta^\sharp(m)(a))(u).
	\end{align*}
%
%To define the language accepted by a weighted automaton, we first introduce the notion of \emph{observability map}, 
%and its dual, the reachability map. 
%
%\begin{definition}[Reachability and observability maps]
%	Given a WFA $\aut = (Q, \delta, i, o)$, its \emph{reachability map} $\reach_\aut \colon \free(A^*) \to \free(Q)$ and \emph{observability map} $\obs_\aut \colon \free(Q) \to \Semi^{A^*}$ are the semimodule homomorphisms given by
%	\begin{align*}
%		\reach_\aut^\dagger(\eword) &
%			= i &
%			\obs_\aut(m)(\eword) &
%			= o^\sharp(m) \\
%		\reach_\aut^\dagger(ua) &
%			= \delta^\sharp(\reach_\aut^\dagger(u))(a) &
%			\obs_\aut(m)(au) &
%			= \obs_\aut(\delta^\sharp(m)(a))(u).
%	\end{align*}
%\end{definition}
%
%\begin{definition}[Weighted language]
%\end{definition}
The \emph{language accepted by a WFA} $\aut = (Q, \delta, i, o)$ is the function $\lang_\aut \colon A^* \to \Semi$ given by $\lang_\aut = \obs_\aut(i)$.
Equivalently, one can define this as $\lang_\aut = o^\sharp \circ \reach_\aut^\dagger$.

\section{General Algorithm for WFAs}\label{sec:algo}

In this section we define the general algorithm for WFAs over $\Semi$, 
as described informally in Section~\ref{sec:overview}. Our algorithm assumes 
the existence of a \emph{closedness strategy} (Definition~\ref{def:closedness-strategy}),
which allows one to check whether a table is closed, and in case it is, provide relevant witnesses. 
We then introduce sufficient conditions on $\Semi$ and on the language $\lang$ to be learned 
under which the algorithm terminates. 
	
\begin{definition}[Observation table]
	An \emph{observation table} (or just \emph{table}) $(S, E)$ consists of two sets $S, E \subseteq A^*$.
	We write $\tables = \Psf(A^*) \times \Psf(A^*)$ for the set of finite tables (where $\Psf$(X) denotes the collection of finite subsets of a set $X$).
	Given a language $\lang \colon A^* \to \Semi$, an observation table $(S, E)$ determines the \emph{row function} $\row_{(S, E, \lang)} \colon S \to \Semi^E$ and the \emph{successor row} function $\srow_{(S, E, \lang)} \colon S \cdot A \to \Semi^E$ as follows:
	\begin{align*}
		\row_{(S, E, \lang)}(w)(v) &
			= \lang(wv) &
			\srow_{(S, E, \lang)}(wa)(v) &
			= \lang(wav).
	\end{align*}
	We often write $\row_\lang$ and $\srow_\lang$, or even $\row$ and $\srow$, when the parameters are clear from the context.
\end{definition}

A table is \emph{closed} if the successor rows are linear combinations of the existing rows in $S$. 
To make this precise, we use the linearisation $\row^\sharp$ (Definition~\ref{def:linearisation}),
which extends $\row$ to linear combinations of words in $S$.

\begin{definition}[Closedness]
	Given a language $\lang$, a table $(S, E)$ is \emph{closed} if for all $w \in S$ and $a \in A$ there exists $\alpha \in \free(S)$ such that $\srow(wa) = \row^\sharp(\alpha)$.
\end{definition} 
This corresponds to the notion of closedness described in Section~\ref{sec:overview}.

A further important ingredient of the algorithm is a method for checking whether a table is closed.
This is captured by the notion of closedness strategy.

\begin{definition}[Closedness strategy]\label{def:closedness-strategy}
	Given a language $\lang$, a \emph{closedness strategy} for $\lang$ is a family of computable functions $$\left(\cs_{(S, E)} \colon S \cdot A \to \{\bot\} \cup \free(S)\right)_{(S,E) \in \tables}$$ satisfying the following two properties:
	\begin{itemize}
		\item
			if $\cs_{(S, E)}(t) = \bot$, then there is no $\alpha \in \free(S)$ s.t.\ $\row^\sharp(\alpha) = \srow(t)$, and
		\item
			if $\cs_{(S, E)}(t) \ne \bot$, then $\row^\sharp(\cs_{(S, E)}(t)) = \srow(t)$.
	\end{itemize}
\end{definition}
Thus, given a closedness strategy as above, a table $(S,E)$ is closed iff
$\cs_{(S, E)}(t) \ne \bot$ for all $t \in S \cdot A$. More specifically, 
for each $t \in S \cdot A$ we have 
that $\cs_{(S, E)}(t) \ne \bot$ iff 
the (successor) row corresponding to $t$ already forms a linear combination of rows labelled by $S$.
In that case, this linear combination is returned by $\cs_{(S, E)}(t)$. This is used
to close tables in our learning algorithm, introduced below.

Examples of semirings and (classes of) languages that admit a closedness strategy are described at the end of this section. 
Important for our algorithm will be that closedness strategies are computable. This problem is equivalent 
to solving systems of equations $A \underline{x} = \underline{b}$,
where $A$ is the matrix whose columns are $\row(s)$ for $s \in S$, $\underline{x}$ is a vector of length $|S|$, and 
$\underline{b}$ is the vector consisting of the row entries in $\srow(t)$ for some $t \in S \cdot A$. 
These observations motivate the following definition.

\begin{definition}[Solvability]\label{def:solvability}
	A semiring $\Semi$ is \emph{solvable} if a solution to any finite system of linear equations of the form $A \underline{x} = \underline{b}$ is computable.
\end{definition}

We have the following correspondence.

\begin{proposition}\label{prop:strategy}
	For any language accepted by a WFA over any semiring there exists a closedness strategy if and only if the semiring is solvable.
\end{proposition}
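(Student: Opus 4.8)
The plan is to prove the two directions of the biconditional separately, using Definition~\ref{def:closedness-strategy} and Definition~\ref{def:solvability} as the bridge. For the ``only if'' direction, suppose $\Semi$ is not solvable, so there is a finite system $A\underline{x} = \underline{b}$ whose solvability (i.e.\ the existence, and if so computation, of a solution) is not decidable in the required sense. I would encode this system into an observation table for some WFA-accepted language and argue that a closedness strategy for that language would decide the system. Concretely: the columns of $A$ and the vector $\underline{b}$ are finite-support elements of $\Semi^E$ for a suitable finite $E$; the question ``does there exist $\alpha \in \free(S)$ with $\row^\sharp(\alpha) = \srow(t)$'' is precisely the question ``is $A\underline{x} = \underline{b}$ solvable'', and if the strategy returns a non-$\bot$ value it returns a witness $\alpha$, i.e.\ a computed solution. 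The one subtlety is that the table must arise from an actual WFA-accepted language $\lang$; this is handled by realising arbitrary finite row data as (a restriction of) the language of a finite WFA — e.g.\ a WFA whose reachable behaviours produce exactly the prescribed values $\lang(wv)$ on the finite set of words $w \in S \cup S\cdot A$, $v \in E$ (the values elsewhere are irrelevant to the table). So the argument is: closedness strategy for this $\lang$ $\Rightarrow$ decide and solve the chosen system $\Rightarrow$ $\Semi$ is solvable.

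For the ``if'' direction, assume $\Semi$ is solvable and let $\lang = \lang_\aut$ for some WFA $\aut$. I would define $\cs_{(S,E)}(t)$ directly: given $t \in S\cdot A$, form the matrix $A$ with columns $\row(s)$ ($s \in S$) and the target $\underline{b} = \srow(t)$, invoke the solvability oracle; if it reports that no solution exists, return $\bot$; otherwise take the computed solution $\underline{x}$ and return the corresponding $\alpha \in \free(S)$. The two defining properties of a closedness strategy follow immediately: a solution $\underline{x}$ to $A\underline{x}=\underline{b}$ is exactly an $\alpha$ with $\row^\sharp(\alpha) = \srow(t)$, and the solvability notion guarantees the procedure is computable and correctly distinguishes the solvable from the unsolvable case. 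One should note that $\row$, $\srow$, and hence the entries of $A$ and $\underline{b}$, are themselves computable from $\lang$ (membership values), which is needed for $\cs_{(S,E)}$ to be a family of \emph{computable} functions; since $\lang$ is accepted by a finite WFA, these values are computable, so this is fine.

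The main obstacle I anticipate is the ``only if'' direction, specifically making precise that an \emph{arbitrary} finite instance $A\underline{x} = \underline{b}$ can be realised inside a table coming from a genuinely WFA-recognisable language, rather than an arbitrary language. One has to check that the finitely many prescribed values can be extended to (or are consistent with) the language of some finite WFA over $\Semi$ — this is where a small explicit construction is needed, e.g.\ a WFA with one state per relevant word, or an argument that any finite partial assignment of weights extends to a rational series. The ``if'' direction and the remaining bookkeeping (that row/successor-row entries are computable, that $\free(S) = \Semi^S$ since $S$ is finite, that a vector solution and a finite-support function over $S$ are the same data) are routine.
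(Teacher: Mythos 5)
Your ``if'' direction is exactly the paper's: for finite $S,E$ the entries $\row(s)(v)=\lang(sv)$ and $\srow(t)(v)=\lang(tv)$ are membership values, so $\cs_{(S,E)}(t)$ is literally the problem of computably solving $A\underline{x}=\underline{b}$ with columns $\row(s)$ and target $\srow(t)$, which solvability supplies. The gap is in the ``only if'' direction, at the very point you flag and leave open, and it is slightly more than the realisability issue you name. Your reduction needs, for an \emph{arbitrary} finite system $A\underline{x}=\underline{b}$, a WFA-accepted language and a table $(S,E)$ whose column and target data are exactly the entries of the system. But a table cell is indexed by a concatenation $sv$, and nothing in your setup prevents distinct cells from naming the same word: with $S=E=\{\eword,a\}$, for instance, the cells $(\eword,a)$ and $(a,\eword)$ are both $\lang(a)$, so not every matrix can be encoded and the claim ``the question about $\alpha$ is precisely the question whether $A\underline{x}=\underline{b}$ is solvable'' fails for such a choice. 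Your fallback, ``any finite partial assignment of weights extends to a rational series'', is only usable once you know the assignment of entries to \emph{words} is well defined, i.e.\ that the cell-to-word map is injective; you never arrange this.

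The paper closes both points with one short device: take a sufficiently large alphabet and let $S$ and $E$ consist of single-letter words (the chosen $t\in S\cdot A$ is then a two-letter word). Every cell $\lang(sv)$ is a distinct two-letter word and every target entry $\lang(tv)$ a distinct three-letter word, so the entries of $A$ and $\underline{b}$ can be prescribed completely independently; and the language can simply be taken to be non-zero on only these finitely many words, and any finitely supported language is accepted by a WFA (one state per prefix of a support word suffices), so the realisability concern you raise becomes immediate. With that addition your argument coincides with the paper's proof; without it, the ``only if'' direction is not yet established.
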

\begin{proof}
	If the semiring is solvable, we obtain a closedness strategy by the remarks prior to \Cref{def:solvability}.
	Conversely, we can construct a language that is non-zero on finitely many words and encode in a table $(S, E)$ a given linear equation.
	To be able to freely choose the value in each table cell, we can consider a sufficiently large alphabet to make sure $S$ and $E$ contain only single-letter words.
	This avoids dependencies within the table.
	\qed
\end{proof}

\begin{algorithm}[H]
	\caption{Abstract learning algorithm for WFA over $\Semi$}\label{alg:main}
	\begin{algorithmic}[1]
		\State $S, E \gets \{\eword\}$
		\While{\texttt{true}} 
			\While {$\cs_{(S, E)}(t) = \bot$ for some $t \in S \cdot A$} \label{alg:while1}
				\State $S \gets S \cup \{t\}$ \label{alg:while2}
			\EndWhile
			\For{$s \in S$} \label{alg:hyp-begin}
				\State $o(s) \gets \row_\lang(s)(\eword)$
				\For{$a \in A$}
					\State $\delta(s)(a) \gets \cs_{(S,E)}(sa)$\label{alg:hyp-end}
				\EndFor
			\EndFor
			\If{$\mathsf{EQ}(S, \delta, \eword, o) = w \in A^*$}\label{alg:eq-check}
				\State $E \gets E \cup \suffixes(w)$
			\Else
				\State \textbf{return} $(S, \delta, \eword, o)$\label{alg:ret}
			\EndIf
		\EndWhile
	\end{algorithmic}
\end{algorithm}

We now have all the ingredients to formulate the algorithm to learn weighted languages over a general semiring. The pseudocode is displayed in Algorithm~\ref{alg:main}.

The algorithm keeps a table $(S,E)$, and starts by initialising both $S$ and $E$ to contain just the empty word.
The inner while loop (lines~\ref{alg:while1}--\ref{alg:while2}) uses the closedness strategy to repeatedly 
check whether the current table is closed and add new rows in case it is not. Once the table is closed, 
a hypothesis is constructed, again using the closedness strategy (lines~\ref{alg:hyp-begin}--\ref{alg:hyp-end}). 
This hypothesis $(S,\delta, \eword, o)$ is then given to the teacher for an equivalence check.
The equivalence check is modelled by $\mathsf{EQ}$ (line~\ref{alg:eq-check}) as follows: 
 if the hypothesis is incorrect,
the teacher non-deterministically returns a counterexample $w \in A^*$, the condition evaluates to \texttt{true}, 
and the suffixes of $w$ are added to $E$;
otherwise, if the hypothesis is correct, the condition on line~\ref{alg:eq-check} evaluates to
\texttt{false}, and the algorithm returns the correct hypothesis on line~\ref{alg:ret}. 

\subsection{Termination of the General Algorithm}

The main question remaining is: under which conditions does this algorithm terminate and hence learns the unknown weighted language? We proceed to give abstract conditions under which it terminates. 
There are two main assumptions: 
\begin{enumerate}
	\item
		A way of measuring progress the algorithm makes with the observation table when it distinguishes linear combinations of rows that were previously equal, together with a bound on this progress (Definition~\ref{def:progress-measure}).
	\item
		An assumption on the \emph{Hankel matrix} of the input language (Definition~\ref{def:hankel}), which makes sure we encounter finitely many closedness defects throughout any run of the algorithm. More specifically, we assume that the Hankel matrix satisfies a finite approximation property (Definition~\ref{def:ascending}). %\alex{we need a forward ref here}
\end{enumerate}
The first assumption is captured by the definition of progress measure:
\begin{definition}[Progress measure]\label{def:progress-measure}
	A \emph{progress measure} for a language $\lang$ is a function $\size \colon \tables \to \N$ such that
	\begin{itemize}
		\item[(a)]
			there exists $n \in \N$ such for all $(S, E) \in \tables$ we have $\size(S, E) \le n$;
		\item[(b)]
			given $(S, E), (S, E') \in \tables$ and $s_1, s_2 \in \free(S)$ such that $E \subseteq E'$ and $\row_{(S, E, \lang)}^\sharp(s_1) = \row_{(S, E, \lang)}^\sharp(s_2)$ but $\row_{(S, E', \lang)}^\sharp(s_1) \ne \row_{(S, E', \lang)}^\sharp(s_2)$, we have $\size(S, E') > \size(S, E)$.
	\end{itemize}
\end{definition}
A progress measure assigns a `size' to each table, in such a way that (a) there is a global bound on the 
size of tables, and (b) if we extend a table with some proper tests in $E$, i.e., such that some combinations of
rows in $\row^\sharp$ that
were equal before get distinguished by a newly added test, then the size of the extended table is properly above
the size of the original table. This is used to ensure that, when adding certain counterexamples supplied by the teacher, 
the size of the table, measured according to the above $\size$ function, properly increases.
%As such, it forms the basis for a ranking function for the outside while loop of Algorithm~\ref{alg:main}.
%G: ^ commented this out because some counterexamples lead to a closedness defect instead, hence also the added word `certain'

The second assumption that we use for termination is phrased in terms of the Hankel matrix
associated to the input language $\lang$, which represents $\lang$
as the (semimodule generated by the) infinite table where both the rows and columns contain all words. 
The Hankel matrix is defined as follows.

\begin{definition}[Hankel matrix]\label{def:hankel}
	Given a language $\lang \colon A^* \to \Semi$, the \emph{semimodule generated by a table} $(S, E)$ is given by the image of $\row^\sharp$.
	We refer to the semimodule generated by the table $(A^*, A^*)$ as the \emph{Hankel matrix} of $\lang$.
\end{definition}
The Hankel matrix is approximated by the tables 
that occur during the execution of the algorithm. For termination, we will therefore assume
that this matrix satisfies the following finite approximation condition. 

\begin{definition}[Ascending chain condition]\label{def:ascending}
	We say that a semimodule $M$ satisfies the \emph{ascending chain condition} if for all inclusion chains of subsemimodules of $M$,
	\[
		S_1 \subseteq S_2 \subseteq S_3 \subseteq \cdots,
	\]
	there exists $n \in \N$ such that for all $m \ge n$ we have $S_m = S_n$.
\end{definition}

Given the notions of progress measure, Hankel matrix and ascending chain condition, we 
can formulate the general theorem for termination of Algorithm~\ref{alg:main}.  

\begin{theorem}[Termination of the abstract learning algorithm]\label{thm:termination}
	In the presence of a progress measure, Algorithm~\ref{alg:main} terminates whenever the Hankel matrix of
	the target language satisfies the ascending chain condition (\Cref{def:ascending}).
\end{theorem}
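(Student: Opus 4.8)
The plan is to show that each of the two nested loops in Algorithm~\ref{alg:main} can only execute finitely many times, and that the outer \texttt{while} loop itself runs only finitely often. I would organise the argument around three quantities that change as the algorithm runs: the semimodule generated by the current table $(S,E)$, viewed as a subsemimodule of the Hankel matrix; the progress measure $\size(S,E)$; and the number of equivalence queries posed. The key observation linking these to the loop structure is that $S$ only ever grows (inside the inner loop) and $E$ only ever grows (when a counterexample is processed), so both are monotone, and every table occurring in the run generates a subsemimodule of the Hankel matrix of $\lang$.

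First I would handle the inner \texttt{while} loop (lines~\ref{alg:while1}--\ref{alg:while2}). Whenever $\cs_{(S,E)}(t)=\bot$ for some $t\in S\cdot A$, the defining property of a closedness strategy guarantees there is no $\alpha\in\free(S)$ with $\row^\sharp(\alpha)=\srow(t)$; after adding $t$ to $S$, the row $\row(t)=\srow(t)$ (with respect to the unchanged $E$) is a new element of the generated semimodule that was not there before, so the semimodule generated by $(S\cup\{t\},E)$ strictly contains the one generated by $(S,E)$. Since $E$ is fixed throughout a single execution of the inner loop, and each generated semimodule is a subsemimodule of the (fixed) Hankel matrix restricted to columns $E$ — or, more cleanly, a subsemimodule of the full Hankel matrix after composing with the projection — the ascending chain condition forbids an infinite strictly increasing chain, so the inner loop terminates. (One technical point to get right here: I want the strict inclusion to be of subsemimodules of a single ambient semimodule satisfying the ACC; the cleanest route is to note that the Hankel matrix of $\lang$ satisfies the ACC and that the semimodule generated by any finite table embeds into it via the restriction/projection maps, so any strictly ascending chain of table-semimodules pulls back to a strictly ascending chain inside the Hankel matrix.)

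Next I would bound the number of outer iterations. Each outer iteration that does not return ends by processing a counterexample $w$ with $\lang(w)\neq\lang_\aut(w)$ for the current hypothesis $\aut=(S,\delta,\eword,o)$, adding $\suffixes(w)$ to $E$. The crucial claim is that this enlargement of $E$ is \emph{proper} in the sense required by clause~(b) of Definition~\ref{def:progress-measure}: the hypothesis was built so that, with respect to the old $E$, the successor rows $\srow(sa)$ equal $\row^\sharp(\cs_{(S,E)}(sa)) = \row^\sharp(\delta(s)(a))$, i.e. the transition structure is consistent with all tests currently in $E$; the fact that the hypothesis nonetheless disagrees with $\lang$ on $w$ means that some test among the newly added suffixes of $w$ separates a pair of linear combinations of rows that the old $E$ identified. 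Spelling this out is the delicate part of the proof — I would use the reachability/observability maps from \Cref{def:wfa} to trace $w$ through the hypothesis, locate the first suffix $v$ of $w$ at which the computed value diverges from $\lang$, and exhibit $s_1,s_2\in\free(S)$ (one built from the closedness witnesses along the run of $w$, one being a single state) that agree on all of old $E$ but differ at $v$. Granting this, clause~(b) gives $\size(S,E')>\size(S,E)$ after each counterexample, and clause~(a) gives a global ceiling $n$; hence only finitely many counterexamples, so finitely many outer iterations.

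Finally, between consecutive counterexamples the inner loop may enlarge $S$, and across the whole run $S$ grows monotonically; the same ACC argument on the Hankel matrix bounds the total number of times $S$ is ever enlarged, so only finitely much work happens in total. Putting the pieces together: finitely many outer iterations, each preceded by a terminating inner loop, and the hypothesis construction and $\mathsf{EQ}$ call are single finite steps, so the algorithm terminates. I expect the main obstacle to be the separation claim in the counterexample analysis — precisely identifying the pair of formal linear combinations in $\free(S)$ that witnesses clause~(b) of the progress measure, since this is where the interplay between the closedness witnesses used to define $\delta$ and the semantics of the hypothesis automaton has to be made exact; everything else is either monotonicity plus the ascending chain condition, or bookkeeping.
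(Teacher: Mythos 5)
Your overall architecture (ACC for closedness defects, progress measure for counterexamples) matches the paper's, but the counterexample step contains a genuine gap: you claim unconditionally that, because the hypothesis transitions are consistent with the old $E$, any counterexample $w$ must, once $\suffixes(w)$ is added, separate two linear combinations $s_1,s_2\in\free(S)$ that agreed on the old $E$, so that clause (b) of Definition~\ref{def:progress-measure} fires at every outer iteration. This is false in general: the disagreement on $w$ can surface purely as a closedness defect of the enlarged table (a successor row escaping the span of the rows) without any pair of elements of $\free(S)$ being separated. Concretely, over $\R$ or $\Z$, let $\lang(a^j)$ be the Fibonacci sequence with $\lang(\eword)=\lang(a)=1$, $\lang(aa)=2$, $\lang(aaa)=3$ (accepted by a two-state WFA). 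With $S=E=\{\eword\}$ the table is closed since $\srow(a)=\row(\eword)$, the unique one-state hypothesis accepts the constant language $1$, and $aa$ is a counterexample; after adding its suffixes we have $\row(\eword)=(1,1,2)$ and $\srow(a)=(1,2,3)$, a closedness defect, while no two elements of $\free(\{\eword\})$ that agreed on the old column have been separated (distinct scalars already disagreed there), so a rank-style progress measure does not increase. Hence your count of outer iterations does not go through as stated, and the witness pair you sketch (``one being a single state'') cannot exist here; in the correct construction both members of the pair are built from closedness witnesses, and one of them needs witnesses for the \emph{new} table, i.e.\ it exists only when the enlarged table is closed---which is exactly the case you cannot assume. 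What is true is a dichotomy: adding the suffixes yields a closedness defect or distinguishes two combinations of rows, which the paper imports from~\cite{vanheerdt2017}. The paper therefore orders the argument differently: the ascending chain condition bounds the total number of defect resolutions over the whole run, so from some point on every step processes a counterexample whose enlarged table stays closed, and only then does the dichotomy force the distinguishing branch and the progress measure yield the contradiction. Your proof is repairable with pieces you already have---each outer iteration either increases $\size$ or is followed by at least one enlargement of $S$, and both events are globally bounded---but that combined count, or the paper's ordering, must replace the false unconditional claim.

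A smaller point: in the inner-loop argument, the semimodule generated by a table $(S,E)$ does not embed into the Hankel matrix via restriction---the projection $\Semi^{A^*}\to\Semi^{E}$ is surjective, so that module is a quotient of a submodule of the Hankel matrix, not a submodule. The conclusion you want is still correct, and the clean route is the paper's: track the module $H_n$ generated by $(S_n,A^*)$; a closedness defect with respect to $E$ is a fortiori a defect with respect to $A^*$, so resolving it strictly enlarges $H_n$ inside the Hankel matrix, and the ascending chain condition applies there directly. This also gives the global bound on the number of times $S$ grows that your repair needs.
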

\begin{proof}
	Suppose the algorithm does not terminate. Then there is a sequence $\{(S_n, E_n)\}_{n \in \N}$ of tables where $(S_0, E_0)$ is the initial table and $(S_{n + 1}, E_{n + 1})$ is formed from $(S_n, E_n)$ after resolving a closedness defect or adding columns due to a counterexample.

	We write $H_n$ for the semimodule generated by the table $(S_n, A^*)$.
	We have $S_n \subseteq S_{n + 1}$ and thus $H_n \subseteq H_{n + 1}$.
	Note that a closedness defect for $(S_n, E_n)$ is also a closedness defect for $(S_n, A^*)$, so if we resolve the defect in the next step, the inclusion $H_n \subseteq H_{n + 1}$ is strict.
	Since these are all included in the Hankel matrix, which satisfies the ascending chain condition, there must be an $n$ such that for all $k \ge n$ we have that $(S_k, E_k)$ is closed.

	In~\cite[Section~6]{vanheerdt2017} it is shown that in a general table used for learning automata with side-effects given by a monad there exists a suffix of each counterexample for the corresponding hypothesis that when added as a column label leads to either a closedness defect or to distinguishing two combinations of rows in the table.
	Since WFAs are automata with side-effects given by the free semimodule monad\footnote{%
		We note that~\cite{vanheerdt2017} assumes the monad to preserve finite sets.
		However, the relevant arguments do not depend on this.
	} and we add all suffixes of the counterexample to the set of column labels, this also happens in our algorithm.
	Thus, for all $k \ge n$ where we process a counterexample, there must be two linear combinations of rows distinguished, as closedness is already guaranteed.
	Then the semimodule generated by $(S_k, E_k)$ is a strict quotient of the semimodule generated by $(S_{k + 1}, E_{k + 1})$.
	By the progress measure we then find $\size(S_k, E_k) < \size(S_{k + 1}, E_{k + 1})$, which cannot happen infinitely often.
	We conclude that the algorithm must terminate.
	\qed
\end{proof}

%We briefly discuss applications to two classes of semirings for which learning algorithms are already known in the literature~\cite{Bergadano,vanheerdt2017}.

To illustrate the hypotheses needed for Algorithm~\ref{alg:main} and its termination (Theorem~\ref{thm:termination}),
we consider two classes of semirings for which learning algorithms are already known in the literature~\cite{Bergadano,vanheerdt2017}.

\begin{example}[Weighted languages over fields]
	Consider any field for which the basic operations are computable.
	Solvability is then satisfied via a procedure such as Gaussian elimination, so by Proposition~\ref{prop:strategy} there exists a closedness strategy. Hence, 
	we can instantiate Algorithm~\ref{alg:main} with $\Semi$ being such a field. 
	
	For termination, we show that the hypotheses of Theorem~\ref{thm:termination} are satisfied
	whenever the input language is accepted by a WFA.
	First, a progress measure is given by the dimension of the vector space generated by the table.
	To see this, note that if we distinguish two linear combinations of rows, we can assume without loss of generality that one of these linear combinations in the extended table uses only basis elements.
	This in turn can be rewritten to distinguishing a single row from a linear combination of rows using field operations, with the property that the extended version of the single row is a basis element.
	Hence, the row was not a basis element in the original table, and therefore the dimension of the vector space generated by the table has increased.
	Adding rows and columns cannot decrease this dimension, so it is bounded by the dimension of the Hankel matrix.
	Since the language we want to learn is accepted by a WFA, the associated Hankel matrix has a finite dimension~\cite{CarlyleP71,fliess1974} (see also, e.g.,~\cite{DBLP:conf/nips/BalleM12}), providing a bound for our progress measure.

%	Since the language we want to learn is accepted by a WFA, it is well known that the Hankel matrix is generated by 
%	the languages accepted by the states of the minimal WFA accepting the language (cf.~e.g.~\cite{DBLP:conf/nips/BalleM12} and references therein).
%	This gives the Hankel matrix a finite dimension, which provides a bound for our progress measure.
	
	Finally, for any ascending chain of subspaces of the Hankel matrix, these subspaces are of finite dimension bounded by the dimension of the Hankel matrix. 
	The dimension increases along a strict subspace relation, so the chain converges.
\end{example}

\begin{example}[Weighted languages over finite semirings]
	Consider any finite semiring.
	Finiteness allows us to apply a brute force approach to solving systems of equations.
	This means the semiring is solvable, and hence a closedness strategy exists by Proposition~\ref{prop:strategy}.
	
	For termination, we can define a progress measure by assigning to each table the size of the image of $\row^\sharp$.
	Distinguishing two linear combinations of rows increases this measure.
	If the language we want to learn is accepted by a WFA, then the Hankel matrix contains 
	a subset of the linear combinations of the languages of its states.
	Since there are only finitely many such linear combinations, the Hankel matrix is finite, which bounds our measure.
	A finite semimodule such as the Hankel matrix in this case does not admit infinite chains of subspaces.
	We conclude by Theorem~\ref{thm:termination} that Algorithm~\ref{alg:main} terminates for the instance 
	that the semiring $\Semi$ is a finite, if the input language is accepted by a WFA over $\Semi$. 
\end{example}

For the Boolean semiring, an instance of the above finite semiring example, WFAs are non-deterministic finite automata.
The algorithm we recover by instantiating Algorithm~\ref{alg:main} to this case is close to the algorithm first described by Bollig et al.~\cite{BolligHKL09}.
The main differences are that in their case the hypothesis has a state space given by a minimally generating subset of the distinct rows in the table rather than all elements of $S$, and they do apply a notion of consistency.

In Section~\ref{sec:PIDs} we will show that Algorithm~\ref{alg:main} can learn WFAs over principal ideal domains---notably including the integers---thus providing a strict generalisation of existing techniques.

\section{Issues with Arbitrary Semirings}\label{sec:naturals}

We concluded the previous section with examples of semirings for which Algorithm~\ref{alg:main} terminates if the target language is accepted by a WFA. 
In this section, we prove a negative result for the algorithm over the semiring $\N$: we show that it does not terminate
on a certain language over $\N$ accepted by a WFA over $\N$, as anticipated in Section~\ref{sec:overview-arbitrary}. 
This means that Algorithm~\ref{alg:main} does not work well for arbitrary semirings.
The problem is that the Hankel matrix of a language recognised by WFA does not necessarily satisfy the 
ascending chain condition that is used to prove Theorem~\ref{thm:termination}.
In the example given in the proof below, the Hankel matrix is not even finitely generated.

\begin{theorem}
	There exists a WFA $\aut_\N$ over $\N$ such that Algorithm~\ref{alg:main} does not terminate when given $\lang_{\aut_\N}$ as input, regardless of the closedness strategy used.
\end{theorem}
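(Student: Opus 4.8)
The plan is to take $\aut_\N$ to be the two-state WFA over $\N$ displayed in \eqref{eq:famous-automaton-over-N}, whose accepted language is $\lang(a^j) = 2^j - 1$, and to track the run of Algorithm~\ref{alg:main} on input $\lang$. First I would observe that the run is forced up to the first equivalence query, independently of the closedness strategy: from $S = E = \{\eword\}$ the only row is $\row(\eword) = (\lang(\eword)) = (0)$, so $\row^\sharp(\alpha) = 0$ for every $\alpha \in \free(S)$, while $\srow(a) = (\lang(a)) = (1)$; hence every strategy must return $\bot$ at $a$ and the algorithm adds $a$ to $S$. With $S = \{\eword,a\}$ and $E = \{\eword\}$ the table is closed, since there is a single column and $\srow(a^{j+1})(\eword) = \lang(a^{j+1}) = (2^{j+1}-1)\cdot\row(a)(\eword)$ is an $\N$-multiple of $\row(a)$. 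So the algorithm builds a hypothesis $\hyp$ and queries the teacher.

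Second, I would show that $\hyp$ is always incorrect. The strategy axioms force the $a$-coefficients of $\cs_{(S,E)}(a)$ and $\cs_{(S,E)}(aa)$ to be $1$ and $3$ respectively, leaving only the empty-word coefficients $c_0 := \cs_{(S,E)}(a)(\eword)$ and $d_0 := \cs_{(S,E)}(aa)(\eword)$ free; computing the reachability maps of $\hyp$ then gives $\lang_\hyp(aa) = c_0 + 3$ and, when $c_0 = 0$, $\lang_\hyp(aaa) = d_0 + 9$. Since $\lang(aa) = 3$ and $\lang(aaa) = 7$, the hypothesis disagrees with $\lang$ at $aa$ if $c_0 \ge 1$ and at $aaa$ if $c_0 = 0$ (as $d_0 + 9 \ge 9 > 7$). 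Hence the teacher returns some counterexample $w$, and because $\lang_\hyp(\eword) = o(\eword) = \lang(\eword)$ we have $w \ne \eword$, so $a \in \suffixes(w)$, and after this step $E \supseteq \{\eword,a\}$.

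The heart of the proof is then a lemma: once $E \supseteq \{\eword,a\}$, the table is never closed again, so the inner while loop of Algorithm~\ref{alg:main} runs forever. Since rows are only ever added by resolving a defect, $S$ stays a set of consecutive powers $\{\eword,a,\dots,a^k\}$; for $1 \le j \le k$ we have $\srow(a^j) = \row(a^j)$, so $\cs$ cannot return $\bot$ there, and I claim it must return $\bot$ at the remaining element $a^{k+1}$ of $S\cdot A$. Indeed, if $\srow(a^{k+1}) = \row^\sharp(\alpha)$ with $\alpha = \sum_{i\le k} c_i \partial_{a^i}$ and $c_i \in \N$, then evaluating at columns $\eword$ and $a$ gives $\sum_i c_i(2^i-1) = 2^{k+1}-1$ and $\sum_i c_i(2^{i+1}-1) = 2^{k+2}-1$; subtracting over $\Z$ yields $\sum_i c_i 2^i = 2^{k+1}$, whence the first equation forces $\sum_i c_i = 1$. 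As the $c_i$ are natural numbers, exactly one equals $1$, say $c_m$, so $2^m = 2^{k+1}$, i.e.\ $m = k+1$, contradicting $m \le k$. Thus the defect at $a^{k+1}$ persists under every closedness strategy, $S$ grows unboundedly, and the algorithm never terminates. The main obstacle I anticipate is the second step: the strategy axioms do not pin down $\hyp$ completely, so one must carefully isolate the parts that are forced and verify that the remaining freedom still produces a wrong first hypothesis — while also keeping track that $E$ stays suffix-closed and $S$ stays a set of consecutive powers of $a$, facts that the lemma relies on.
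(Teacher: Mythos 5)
Your proposal is correct, but it follows a genuinely different route from the paper's proof. The paper argues by contradiction on the \emph{final} table of a hypothetically terminating run: writing $J = \{j \mid a^j \in S\}$ and using that termination forces both closedness (so $\srow(a^{n}a) = \sum_{j \in J} k_j \cdot \row(a^j)$ for some $k_j \in \N$, $n = \max J$) and correctness of the last hypothesis, it derives $\sum_j k_j = 1$ in each of two cases ($E = \{\eword\}$, analysed through the hypothesis' transition structure, and $E \ni a^m$ with $m \ge 1$, analysed through the column $a^m$), and then contradicts injectivity of $\lang$; this makes no assumption about how the run evolved, which is why the paper can remark that the argument also rules out variants such as adding prefixes of counterexamples to $S$. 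You instead trace the forced run of Algorithm~\ref{alg:main} itself: the first hypothesis (over $S = \{\eword,a\}$, $E = \{\eword\}$) is shown wrong by explicit computation of $\lang_\hyp(aa)$ and $\lang_\hyp(aaa)$ for the only degrees of freedom $c_0, d_0$ the closedness strategy has, so the teacher must return a nonempty counterexample, forcing $a \in E$; after that, your two-column computation ($\sum_i c_i 2^i = 2^{k+1}$ together with $\sum_i c_i = 1$ has no solution in $\N$ with $i \le k$) shows the inner closedness loop diverges, with $S$ growing through consecutive powers of $a$ forever. Both arguments are sound and share the same arithmetic core (your lemma is essentially the paper's second case with $m = 1$, and your explicit first-hypothesis computation replaces the paper's more general $E = \{\eword\}$ case). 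What each buys: the paper's contradiction-on-termination argument is more robust to changes in the algorithm and avoids having to verify run-invariants (consecutiveness of $S$, $a$ entering $E$), while yours is more informative about the failure mode, exhibiting exactly where non-termination occurs --- an unending chain of closedness defects after the first counterexample, reflecting that the Hankel matrix over $\N$ is not finitely generated. The only care needed in your version is precisely the bookkeeping you flag (that the strategy's remaining freedom cannot rescue the first hypothesis, and that $S$ stays $\{\eword, a, \dots, a^k\}$), and your sketch already treats both points adequately.
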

\begin{proof}
	Let $\aut_\N$ be the automaton over the alphabet $\{a\}$ given in~\eqref{eq:famous-automaton-over-N} in Section~\ref{sec:overview-arbitrary}. Formally, 
	$\aut_\N = (Q, \delta, i, o)$, where
	\begin{align*}
		Q &
			= \{q_0, q_1\} &
			i &
			= q_0 &
			o(q_0) &
			= 0 \\
		\delta(q_0)(a) &
			= q_0 + q_1 &
			\delta(q_1)(a) &
			= 2q_1 &
			o(q_1) &
			= 1.
	\end{align*}
	As mentioned in Section~\ref{sec:overview-arbitrary}, the language $\lang \colon \{a\}^* \to \N$ accepted by $\aut_\N$ is given by $\lang(a^j) = 2^j - 1$.
	This can be shown more precisely as follows. First one shows by induction on $j$ that $\obs_{\aut_\N}(q_1)(a^j) = 2^j$ for all $j \in \N$---we leave the straightforward argument to the reader.
	Second, we show, again by induction on $j$, that $\obs_{\aut_\N}(q_0)(a^j) = 2^j - 1$.
	This implies the claim, as $\lang = \obs_{\aut_\N}(q_0)$.
	For $j = 0$ we have $\obs_{\aut_\N}(q_0)(a^j) = o(q_0) = 0 = 2^0-1$ as required.
	For the inductive step, let $j = k + 1$ and assume $\obs_{\aut_\N}(q_0)(a^k) = 2^k - 1$.
	We calculate
	\begin{align*}
		\obs_{\aut_\N}(q_0)(a^{k+1}) &
			= \obs_{\aut_\N}(q_0 + q_1)(a^k) \\
		&
			= \obs_{\aut_\N}(q_0)(a^k) + \obs_{\aut_\N}(q_1)(a^k) \\
		&
			= (2^k-1) + 2^k \\
		&
			= 2^{k + 1} - 1.
	\end{align*}
	Note that in particular the language $\lang$ is injective.

	Towards a contradiction, suppose the algorithm does terminate with table $(S, E)$.
	Let $J = \{j \in \N \mid a^j \in S\}$ and define $n = \mathsf{max}(J)$.
	Since the algorithm terminates with table $(S,E)$, the latter must be closed.
	In particular, there exist $k_j \in \N$ for all $j \in J$ such that $\sum_{j \in J} k_j \cdot \row_\lang(a^j) = \srow_\lang(a^na)$.
	We consider two cases.
	First assume $E = \{\eword\}$ and let $\aut = (Q', \delta', i', o')$ be the hypothesis.
	For all $l \in \N$ we have $\row_\lang^\sharp(\reach_\aut^\dagger(a^l))(\eword) = 2^l - 1$ because $\aut$ must be correct.
	Thus, if $a^l \in S \cdot A$, then $\row_\lang^\sharp(\reach_\aut^\dagger(a^l)) = \srow_\lang(a^l)$.
	In particular,
	\[
		\row_\lang^\sharp(\reach_\aut^\dagger(a^na)) = \srow_\lang(a^na) = \sum_{j \in J} k_j \cdot \row_\lang(a^j).
	\]
	Note that we can choose the $k_j$ such that $\reach_\aut^\dagger(a^na) = \sum_{j \in J} k_j \cdot a^j$.
	Since
	\begin{align*}
		\row_\lang^\sharp\left(\delta'^\sharp\left(\sum_{j \in J} k_j \cdot a^j\right)(a)\right) &
			= \row_\lang^\sharp\left(\sum_{j \in J} k_j \cdot \delta'(a^j)(a)\right) \\
		&
			= \sum_{j \in J} k_j \cdot \row_\lang(\delta'(a^j)(a)) \\
		&
			= \sum_{j \in J} k_j \cdot \srow_\lang(a^ja),
	\end{align*}
	we have
	$
		\row_\lang^\sharp(\reach_\aut^\dagger(a^naa)) = \sum_{j \in J} k_j \cdot \srow_\lang(a^ja)
	$
	and therefore
	\[
		\sum_{j \in J} k_j \cdot \srow_\lang(a^ja)(\eword) = \row_\lang^\sharp(\reach_\aut^\dagger(a^naa))(\eword) = 2^{n + 2} - 1.
	\]
	Then
	\begin{align*}
		2^{n + 2} - 1 &
			= \sum_{j \in J} k_j \cdot \srow_\lang(a^ja)(\eword) 
			= \sum_{j \in J} k_j(2^{j + 1} - 1) \\
		&
			= 2\left(\sum_{j \in J} k_j(2^j - 1)\right) + \sum_{j \in J} k_j 
			= 2(2^{n + 1} - 1) + \sum_{j \in J} k_j,
	\end{align*}
	so $\sum_{j \in J} k_j = 1$.
	This is only possible if there is $j_1 \in J$ s.t.\ $k_{j_1} = 1$ and $k_j = 0$ for all $j \in J \setminus \{j_1\}$.
	However, this implies that $\row_\lang(a^{j_1}) = \srow_\lang(a^na)$, which contradicts injectivity of $\lang$ as $n \ge j_1$.
	%We conclude that $(S, E)$ is not closed and thus that the algorithm did not terminate.
	Thus, the algorithm did not terminate.

	For the other case, assume there is $a^m \in E$ such that $m \ge 1$.
	We have
	\[
		2^{n + 1} - 1 = \srow_\lang(a^na)(\eword) = \sum_{j \in J} k_j \cdot \row_\lang(a^j)(\eword) = \sum_{j \in J} k_j(2^j - 1),
	\]
	so
	\begin{align*}
		\sum_{j \in J} k_j(2^{j + m} - 1) &
			= \sum_{j \in J} k_j \cdot \row_\lang(a^j)(a^m) \\
		&
			= \srow_\lang(a^na)(a^m) \\
		&
			= 2^{n + m + 1} - 1 \\
		&
			= 2^m(2^{n + 1} - 1) + 2^m - 1 \\
		&
			= 2^m\left(\sum_{j \in J} k_j(2^j - 1)\right) + 2^m - 1 \\
		&
			= \left(\sum_{j \in J} k_j(2^{j + m} - 2^m)\right) + 2^m - 1 \\
		&
			= \left(\sum_{j \in J} k_j(2^{j + m} - 1)\right) + \left(\sum_{j \in J} k_j(1 - 2^m)\right) + 2^m - 1.
	\end{align*}
	Then
	\[
		\left(\sum_{j \in J} k_j(1 - 2^m)\right) + 2^m - 1 = 0.
	\]
	Since $m \ge 1$ this is only possible if there is $j_1 \in J$ s.t.\ $k_{j_1} = 1$ and $k_j = 0$ for all $j \in J \setminus \{j_1\}$.
	However, this implies $\row_\lang(a^{j_1}) = \srow_\lang(a^na)$, which again contradicts injectivity of $\lang$ as $n \ge j_1$.
	%We conclude that $(S, E)$ is not closed and thus that the algorithm did not terminate.
	Thus, the algorithm did not terminate. 
	\qed
\end{proof}

\begin{remark}
	Our proof shows non-termination for a bigger class of algorithms than Algorithm~\ref{alg:main}; it uses only the definition of the hypothesis, that closedness is satisfied before constructing the hypothesis, that $S$ and $E$ contain the empty word, and that termination implies correctness.
	For instance, adding the prefixes of a counterexample to $S$ instead of its suffixes to $E$ will not fix the issue.
\end{remark}

We have thus shown that our algorithm does not instantiate to a terminating one for an arbitrary semiring.
To contrast this negative result, in the next section we identify a class of semirings not previously explored in the learning literature where we can guarantee a terminating instantiation.

\section{Learning WFAs over PIDs}\label{sec:PIDs}

We show that for a subclass of semirings, namely {\em principal ideal domains (PIDs)}, the abstract learning algorithm of \Cref{sec:algo} terminates.
This subclass includes the integers, Gaussian integers, and rings of polynomials in one variable with coefficients in a field.
We will prove that the Hankel matrix of a language over a PID accepted by a WFA has analogous properties to those of vector spaces---finite rank, a notion of progress measure, and the ascending chain condition. We also give a sufficient condition for PIDs to be solvable, which by \Cref{prop:strategy} guarantees the existence of a closedness strategy for the learning algorithm.

To define PIDs, we first need to introduce ideals.
Given a ring $\Semi$, a \emph{(left) ideal} $I$ of $\Semi$ is an additive subgroup of $\Semi$ s.t.\ for all $s \in \Semi$ and $i \in I$ we have $si \in I$.
The ideal $I$ is \emph{(left) principal} if it is of the form $I = \Semi{}s$ for some $s \in \Semi$.

\begin{definition}[PID]
	A \emph{principal ideal domain} $\PID$ is a non-zero commutative ring in which every ideal is principal and where for all $p_1, p_2 \in \PID$ such that $p_1p_2 = 0$ we have $p_1 = 0$ or $p_2 = 0$.
\end{definition}

A module $M$ over a PID $\PID$ is called \emph{torsion free} if for all $p \in \PID$ and any $m \in M$ such that $p \cdot m = 0$ we have $p = 0$ or $m = 0$.
It is a standard result that a module over a PID is torsion free if and only if it is free~\cite[Theorem~3.10]{jacobson2012}.

The next definition of \emph{rank} is analogous to that of the dimension of a vector space and will form the basis for the progress measure.

\begin{definition}[Rank]
	We define the \emph{rank} of a finitely generated free module $\free(X)$ over a PID as $\rank(\free(X)) = |X|$.
\end{definition}

This definition extends to any finitely generated free module over a PID, as $\free(X) \cong \free(Y)$ for finite sets $X$ and $Y$ implies $|X| = |Y|$~\cite[Theorem~3.4]{jacobson2012}.

Now that we have a candidate for a progress measure function, we need to prove it has the required properties.
The following lemmas will help with this.

\begin{lemma}\label{lem:nakayama}
	Given finitely generated free modules $M,N$ over a PID s.t.\ $\rank(M) \ge \rank(N)$, any surjective module homomorphism $f \colon N \to M$ is injective.
\end{lemma}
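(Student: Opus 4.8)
The plan is to reduce the statement to a rank equality and then invoke torsion-freeness of free modules over a PID. Concretely, suppose $f \colon N \to M$ is a surjective module homomorphism with $\rank(M) \ge \rank(N)$. Let $K = \ker f$. The first step is to observe that $K$ is a submodule of the finitely generated free module $N$; since $\PID$ is a PID, every submodule of a finitely generated free module is again free and finitely generated, of rank at most $\rank(N)$ \cite[Theorem~3.6 or similar]{jacobson2012}. So $K \cong \free(Z)$ for some finite $Z$.

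The second step is to pin down $\rank(K)$ using a rank-additivity argument. Since $f$ is surjective, we have a short exact sequence $0 \to K \to N \to M \to 0$. Over a PID, $M$ is free (being finitely generated and torsion free — or directly, as it is a quotient that happens to be free here since it is given as $\free(X)$), hence the sequence splits, giving $N \cong K \oplus M$. Applying $\rank$ and using that rank is additive on direct sums of finitely generated free modules yields $\rank(N) = \rank(K) + \rank(M)$. Combined with the hypothesis $\rank(M) \ge \rank(N)$, this forces $\rank(K) \le 0$, hence $\rank(K) = 0$, i.e.\ $K \cong \free(\emptyset) = 0$. Therefore $\ker f = 0$ and $f$ is injective.

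The third step, if one wants to avoid invoking splitting and rank-additivity as black boxes, is the alternative route of tensoring with the field of fractions $\mathbb{Q}(\PID)$: this is an exact functor, it sends $\free(X)$ to a $\mathbb{Q}(\PID)$-vector space of dimension $|X|$, and a surjection of equal or decreasing finite dimension between vector spaces is an isomorphism, in particular injective; then one argues that $\ker f$ is torsion (it dies after tensoring) but also a submodule of the free — hence torsion free — module $N$, so $\ker f = 0$. Either route works; I would present the direct one via the splitting, citing the structure theorem for finitely generated modules over a PID.

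The main obstacle is making sure the rank bookkeeping is rigorous: specifically, justifying that $\rank$ is well-defined and additive across the short exact sequence, and that submodules of finitely generated free modules over a PID are free of bounded rank. These are all standard consequences of the structure theorem \cite{jacobson2012}, so the difficulty is bibliographic precision rather than mathematical depth. A secondary subtlety is that the paper's \emph{rank} is defined only for finitely generated free modules, so one must confirm at each step that the modules in play ($K$, $M$, $N$) are indeed finitely generated and free before applying $\rank$ — which is exactly what the PID hypothesis buys us.
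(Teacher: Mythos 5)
Your proof is correct, but it takes a genuinely different route from the paper. You analyse the kernel: from the short exact sequence $0 \to \ker f \to N \to M \to 0$ and the fact that $M$ is free (hence projective) you split off $N \cong \ker f \oplus M$, use that submodules of finitely generated free modules over a PID are free of bounded rank, and conclude by rank additivity that $\rank(\ker f) = 0$, so $\ker f = 0$. The paper instead never looks at the kernel: since $\rank(M) \ge \rank(N)$ there exists a surjective module homomorphism $g \colon M \to N$, so $g \circ f \colon N \to N$ is a surjective endomorphism of a finitely generated module, which over a commutative ring is automatically injective (the result of Orzech cited in the paper, in the spirit of Nakayama/Cayley--Hamilton), and injectivity of $f$ follows. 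Your argument buys self-containedness within PID structure theory (splitting, rank additivity, freeness of submodules, all from the cited structure theorems), at the cost of more bookkeeping; the paper's argument is shorter and its key step holds over arbitrary commutative rings, but it leans on a less elementary external theorem and on the small extra observation that the rank inequality yields a surjection $M \to N$. Your third, fraction-field route is also valid and is essentially a localisation version of the same rank count.
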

\begin{proof}
	Since $\rank(M) \ge \rank(N)$, there exists a surjective module homomorphism $g \colon M \to N$.
	Therefore $g \circ f \colon N \to N$ is surjective and by~\cite{orzech1971} an iso.
	In particular, $f$ is injective.
	\qed
\end{proof}

\begin{lemma}\label{lem:quotrank}
	If $M$ and $N$ are finitely generated free modules over a PID such that there exists a surjective module homomorphism $f \colon N \to M$, then $\rank(M) \le \rank(N)$.
	If $f$ is not injective, then $\rank(M) < \rank(N)$.
\end{lemma}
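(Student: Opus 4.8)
The plan is to reduce everything to Lemma~\ref{lem:nakayama}, which already packages the subtle arithmetic (Orzech's theorem that a surjective endomorphism of a finitely generated module is an isomorphism). Write $M \cong \free(X)$ and $N \cong \free(Y)$ with $X, Y$ finite, so $\rank(M) = |X|$ and $\rank(N) = |Y|$.

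For the first claim, suppose towards a contradiction that $\rank(M) > \rank(N)$, i.e.\ $|X| > |Y|$. Then there is a surjective module homomorphism $h \colon N \to M$ obtained by mapping the $|Y|$ free generators of $N$ onto $|Y|$ of the free generators of $X$ and being free to do so since $N$ is free; but a surjection onto $\free(X)$ cannot exist from a free module of strictly smaller rank — this is precisely the statement that rank is well defined, which the excerpt cites as~\cite[Theorem~3.4]{jacobson2012} in the form $\free(X) \cong \free(Y) \Rightarrow |X| = |Y|$. Concretely, if $f \colon N \to M$ is surjective and $\rank(M) > \rank(N)$, compose with a surjection $M \to \free(Y')$ for some $Y'$ with $|Y'| = |Y|$ (projecting away enough generators); the composite $N \to \free(Y')$ is surjective between free modules of equal rank, hence injective by Lemma~\ref{lem:nakayama}, hence an isomorphism, forcing $M \cong \free(Y') \oplus K$ with $K \neq 0$ free of positive rank to have its $\free(Y')$ summand isomorphic to all of $N$ — but then the kernel of $M \to \free(Y')$ is a nonzero free summand that is also the image of the zero map from $N$, a contradiction with surjectivity of $f$. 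Cleaner: since $\rank(M) > \rank(N)$ we would get a surjection $g\colon M \to N$ that is \emph{not} injective (it must kill a rank-$(\rank(M)-\rank(N))$ part), yet $g \circ f \colon N \to N$ is surjective and hence by Orzech an iso, which forces $g$ itself to be injective — contradiction. So $\rank(M) \le \rank(N)$.

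For the second claim, assume $f \colon N \to M$ is surjective but not injective, and suppose for contradiction that $\rank(M) = \rank(N)$. Then Lemma~\ref{lem:nakayama} applies directly (with the roles of $M$ and $N$ as stated there, since $\rank(M) \ge \rank(N)$): any surjective homomorphism $N \to M$ between finitely generated free modules of equal rank is injective. This contradicts the assumption that $f$ is not injective. Combined with the first claim, $\rank(M) < \rank(N)$.

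The only genuinely nontrivial input is Lemma~\ref{lem:nakayama}, hence ultimately Orzech's result on surjective endomorphisms; everything else is bookkeeping with free generators. The step I expect to need the most care is the first claim, where one must produce an auxiliary surjection in the "wrong" direction (from the larger-rank $M$ down to $N$) in order to feed Lemma~\ref{lem:nakayama} — this works because $\rank(M) > \rank(N)$ lets us map a free basis of $M$ onto a free basis of $N$ together with zeros, and then the non-injectivity of that auxiliary map is exactly what collides with $g \circ f$ being an isomorphism.
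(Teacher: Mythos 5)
Your final (``cleaner'') argument for the first claim and your argument for the second claim are correct, and the substance matches the paper's proof: everything reduces to Lemma~\ref{lem:nakayama}, i.e.\ ultimately to Orzech's theorem. The paper is more direct on the first claim: if $\rank(M)>\rank(N)$, Lemma~\ref{lem:nakayama} applies to $f\colon N\to M$ itself, so $f$ is injective as well as surjective, hence an isomorphism, whence $\rank(M)\le\rank(N)$ (the paper phrases this as $M$ being isomorphic to a submodule of $N$), a contradiction. Your variant instead constructs an auxiliary non-injective surjection $g\colon M\to N$ and applies Orzech to $g\circ f$; this works, because $g\circ f$ being an isomorphism together with surjectivity of $f$ does force $g$ to be injective (if $g(m)=0$, write $m=f(n)$, then $(g\circ f)(n)=0$, so $n=0$ and $m=0$) --- a one-line check you should spell out.

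However, the two earlier attempts at the first claim are flawed and should be dropped. The map $h\colon N\to M$ sending the $|Y|$ generators of $N$ to $|Y|$ of the generators of $X$ is \emph{not} surjective onto $M$; it only reaches the submodule spanned by those generators, and whether a surjection from smaller rank can exist is exactly the point at issue, so appealing to ``rank is well defined'' (Jacobson's Theorem~3.4, which concerns isomorphisms only) is circular there. Likewise, the concluding contradiction of the ``concretely'' paragraph (``the kernel \dots is the image of the zero map from $N$'') does not parse; the quick finish from that setup would be that injectivity of the composite $\pi\circ f$ already gives injectivity of $f$, hence $f$ is an isomorphism between modules of different rank, contradiction. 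None of this affects the validity of your final argument, which stands on its own.
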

\begin{proof}
	Let $f \colon N \to M$ be a surjective module homomorphism. Suppose towards a contradiction that 
	$\rank(M) > \rank(N)$.
	By Lemma~\ref{lem:nakayama} $f$ is injective, so $M$ is isomorphic to a submodule of $N$ and $\rank(M) \le \rank(N)$~\cite{jacobson2012}; contradiction.

	For the second part, suppose $f$ is not injective and assume towards a contradiction that $\rank(M) \ge \rank(N)$.
	Again by Lemma~\ref{lem:nakayama} $f$ is injective, which is a contradiction with our assumption.
	Thus, in this case $\rank(M) < \rank(N)$.
	\qed
\end{proof}

The lemma below states that the Hankel matrix of a weighted language over a PID has finite rank which bounds the rank of any module generated by an observation table.
This will be used to define a progress measure, used to prove termination of the learning algorithm for weighted languages over PIDs.

\begin{lemma}[Hankel matrix rank for PIDs]\label{lem:hankel}
	When targeting a language accepted by a WFA over a PID, any module generated by an observation table is free.
	Moreover, the Hankel matrix has finite rank that bounds the rank of any module generated by an observation table.
\end{lemma}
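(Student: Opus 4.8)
The plan is to establish both claims by relating any module generated by an observation table to a submodule of the Hankel matrix, and then invoking the structure theory of modules over a PID together with the rank lemmas just proved. First I would recall that, by \Cref{def:hankel}, the module $H_{(S,E)}$ generated by a table $(S,E)$ is the image of $\row_{(S,E,\lang)}^\sharp \colon \free(S) \to \Semi^E$, and the Hankel matrix $H$ is the image of $\row_{(A^*,A^*,\lang)}^\sharp \colon \free(A^*) \to \Semi^{A^*}$. The key observation is that there is a projection $\pi_E \colon \Semi^{A^*} \to \Semi^E$ restricting a column vector to the coordinates in $E$, and that $\pi_E$ maps the submodule $H' \le H$ generated by $\{\row_{(A^*,A^*,\lang)}(s) \mid s \in S\}$ onto $H_{(S,E)}$. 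Hence $H_{(S,E)}$ is a quotient of a submodule of $H$.

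The freeness claim then follows from torsion-freeness: since $\Semi^{A^*}$ is torsion free (a product of copies of the torsion-free module $\PID$), every submodule of it — in particular every $H_{(S,E)}$, which sits inside $\Semi^E$ — is torsion free, hence free over the PID by the cited standard result \cite[Theorem~3.10]{jacobson2012}. The same argument applied to $H \le \Semi^{A^*}$ shows the Hankel matrix is free. For the finite-rank and bounding claims I would argue as follows. Because $\lang$ is accepted by a WFA $\aut = (Q,\delta,i,o)$ with $|Q|$ finite, the language $\lang$, and more generally every derivative, lies in the image of $\obs_\aut \colon \free(Q) \to \Semi^{A^*}$; unwinding the definitions, $\row_{(A^*,A^*,\lang)}(w) = \obs_\aut(\reach_\aut^\dagger(w))$ for every $w \in A^*$, so $H$ is contained in the image of $\obs_\aut$, which is a quotient of $\free(Q)$ and hence a finitely generated module over the PID. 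A submodule of a finitely generated module over a PID is finitely generated \cite[e.g.][]{jacobson2012}, so $H$ is finitely generated and free, i.e.\ of finite rank $\rank(H) \le |Q|$.

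Finally, to see that $\rank(H)$ bounds $\rank(H_{(S,E)})$: the submodule $H' \le H$ generated by the rows indexed by $S$ is again finitely generated and free (submodule of a f.g.\ free module over a PID), with $\rank(H') \le \rank(H)$ since a f.g.\ free submodule of a f.g.\ free module has rank at most that of the ambient module \cite{jacobson2012}; and $H_{(S,E)}$ is a surjective image of the free module $H'$, so by \Cref{lem:quotrank} we get $\rank(H_{(S,E)}) \le \rank(H') \le \rank(H)$. I expect the main obstacle to be the bookkeeping in the first step — verifying cleanly that the row of the infinite table restricts to the row of the finite table, i.e.\ that $\pi_E \circ \row_{(A^*,A^*,\lang)}^\sharp$ and $\row_{(S,E,\lang)}^\sharp$ agree on $\free(S)$ under the inclusion $\free(S) \hookrightarrow \free(A^*)$ — and, on the algebraic side, in being careful that all the modules in play are genuinely finitely generated before applying the PID structure theorems, since $\free(A^*)$ and $\Semi^{A^*}$ themselves are not.
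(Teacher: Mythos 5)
Your proposal is correct and takes essentially the same route as the paper's proof: the table module is exhibited as a quotient (via restriction of columns to $E$) of the submodule of the Hankel matrix generated by the rows indexed by $S$, freeness comes from torsion-freeness of submodules of $\Semi^{A^*}$ (resp.\ $\Semi^E$), finite rank of the Hankel matrix comes from the factorisation $\obs_\aut \circ \reach_\aut$ through $\free(Q)$, and the rank bounds follow from the PID submodule facts together with \Cref{lem:quotrank}. The only minor difference is that you obtain finite generation of the Hankel matrix by Noetherianness of the image of $\obs_\aut$, whereas the paper factors through the reachable submodule $R = \img(\reach_\aut) \subseteq \free(Q)$ and uses the surjection $R \to H$, which also delivers the explicit bound $\rank(H) \le |Q|$ that your last step asserts with slightly less justification (though that bound is not needed for the statement).
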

\begin{proof}
	Given a WFA $\aut = (Q, \delta, i, o)$, let $M$ be the free module generated by $Q$.
	Note that the Hankel matrix is the image of the composition $\obs_\aut \circ \reach_\aut$.
	Consider the image of the module homomorphism $\reach_\aut \colon \free(A^*) \to M$, which we write as $R$.
	Since $R$ is a submodule of $M$, we know from~\cite{jacobson2012} that $R$ is free and finitely generated with $\rank(R) \le \rank(M)$.
	The Hankel matrix can now be obtained as the image of the restriction of $\obs_\aut \colon M \to \Semi^{A^*}$ to the domain $R$.
	Let $H$ be this image, which we know is finitely generated because $R$ is.
	Since $H$ is a submodule of the torsion free module $\Semi^{A^*}$, it is also torsion free and therefore free.
	We also have a surjective module homomorphism $s \colon R \to H$, so by Lemma~\ref{lem:quotrank} we find $\rank(H) \le \rank(R)$.

	Let $N$ be the module generated by an observation table $(S, E)$.
	We have that $N$ is a quotient of the module generated by $(S, A^*)$, which in turn is a submodule of $H$.
	Using again~\cite{jacobson2012} and Lemma~\ref{lem:quotrank} we conclude that $N$ is free and finitely generated with $\rank(N) \le \rank(H)$.
	\qed
\end{proof}
The second part of Lemma~\ref{lem:hankel} would follow from a PID variant of Fliess' theorem~\cite{fliess1974}. We are not aware of such a result, and leave this for future work.  

\begin{proposition}[Progress measure for PIDs]\label{prop:pidprogress}
	There exists a progress measure for any language accepted by a WFA over a PID.
\end{proposition}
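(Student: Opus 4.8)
The plan is to take for $\size(S,E)$ the \emph{rank of the module generated by the observation table} $(S,E)$, that is, $\size(S,E) = \rank(\img(\row_{(S,E,\lang)}^\sharp))$, where $\img(\row_{(S,E,\lang)}^\sharp) \subseteq \Semi^E$ is the submodule from Definition~\ref{def:hankel}. This is well defined because the target language is accepted by a WFA over a PID, so by Lemma~\ref{lem:hankel} every module generated by a table is free and finitely generated and hence has a rank; moreover that lemma gives a uniform bound $\rank(\img(\row_{(S,E,\lang)}^\sharp)) \le \rank(H)$, where $H$ is the Hankel matrix, which has finite rank. Thus property (a) of Definition~\ref{def:progress-measure} holds with $n = \rank(H)$, and it remains to check (b).

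For (b), fix $(S,E),(S,E') \in \tables$ with $E \subseteq E'$ and witnesses $s_1,s_2 \in \free(S)$ as in the statement, and write $M_E = \img(\row_{(S,E,\lang)}^\sharp)$ and $M_{E'} = \img(\row_{(S,E',\lang)}^\sharp)$. The key observation is that the coordinate restriction map $\pi \colon \Semi^{E'} \to \Semi^E$ (which forgets the coordinates in $E' \setminus E$) is a module homomorphism satisfying $\pi \circ \row_{(S,E',\lang)}^\sharp = \row_{(S,E,\lang)}^\sharp$. This is a short naturality-of-linearisation computation: $\pi(\row_{(S,E',\lang)}(w))$ and $\row_{(S,E,\lang)}(w)$ are both $v \mapsto \lang(wv)$ on $E$, and $\pi$ commutes with sums and scalar multiples. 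Consequently $\pi$ restricts to a surjective module homomorphism $M_{E'} \to M_E$.

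Next I would use the witnesses to show this restricted map is \emph{not} injective. Since $\Semi$ is a (commutative) ring, $s_1 - s_2 \in \free(S)$, and $\row_{(S,E',\lang)}^\sharp(s_1 - s_2) = \row_{(S,E',\lang)}^\sharp(s_1) - \row_{(S,E',\lang)}^\sharp(s_2)$ is a \emph{nonzero} element of $M_{E'}$ by hypothesis, whose image under $\pi$ is $\row_{(S,E,\lang)}^\sharp(s_1 - s_2) = \row_{(S,E,\lang)}^\sharp(s_1) - \row_{(S,E,\lang)}^\sharp(s_2) = 0$. So $\pi|_{M_{E'}} \colon M_{E'} \to M_E$ is surjective and non-injective, and since $M_E$ and $M_{E'}$ are both free and finitely generated by Lemma~\ref{lem:hankel}, Lemma~\ref{lem:quotrank} yields $\rank(M_E) < \rank(M_{E'})$, i.e.\ $\size(S,E) < \size(S,E')$, which is exactly condition (b).

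The argument is essentially the PID analogue of the vector-space progress measure from the fields example, and needs no new algebraic input beyond Lemmas~\ref{lem:hankel} and~\ref{lem:quotrank}. The only mildly delicate points are the verification that $\pi \circ \row_{(S,E',\lang)}^\sharp = \row_{(S,E,\lang)}^\sharp$ as module homomorphisms, and the observation that one must produce genuine non-injectivity of $\pi|_{M_{E'}}$ rather than merely a pair of distinguished rows — which is handled cleanly by passing to the difference $s_1 - s_2$, available because we work over a ring rather than a general semiring. I do not expect a real obstacle here.
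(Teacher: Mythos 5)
Your proposal is correct and follows essentially the same route as the paper: the same $\size(S,E)=\rank$ of the module generated by the table, boundedness via Lemma~\ref{lem:hankel}, and the decrease via Lemma~\ref{lem:quotrank}. Your projection map $\pi$ and the difference $s_1-s_2$ simply make explicit the paper's terse claim that the module generated by $(S,E)$ is a \emph{strict} quotient of the one generated by $(S,E')$.
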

\begin{proof}
	Define $\size(S, E) = \rank(M)$, where $M$ is the module generated by the table $(S, E)$.
	By Lemma~\ref{lem:hankel} this is bounded by the rank of the Hankel matrix.
	If $M$ and $N$ are modules generated by two tables such that $N$ is a strict quotient of $M$, then by Lemma~\ref{lem:quotrank} we have $\rank(M) > \rank(N)$.
	\qed
\end{proof}

Recall that, for termination of the algorithm, \Cref{thm:termination} requires a progress measure, which we defined above, and it requires the Hankel matrix of the language to satisfy the ascending chain condition (\Cref{def:ascending}). Proposition~\ref{prop:pidacc} shows that the latter is always the case for languages over PIDs. 
\begin{proposition}[Ascending chain condition PIDs]\label{prop:pidacc}
	The Hankel matrix of a language accepted by a WFA over a PID satisfies the ascending chain condition.
\end{proposition}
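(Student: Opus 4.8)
The plan is to reduce the statement to the fact that the Hankel matrix is a finitely generated module over the PID, combined with the observation that finitely generated modules over a PID are Noetherian. Concretely, Lemma~\ref{lem:hankel} already tells us that the Hankel matrix $H$ of a language accepted by a WFA over a PID is free and finitely generated, so the only remaining work is to deduce the ascending chain condition of \Cref{def:ascending} from finite generation of $H$.

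First I would note that, since we are working over a ring $\PID$ (so $-1 \in \PID$), every subsemimodule $N$ of $H$ is closed under $n \mapsto (-1) \cdot n = -n$ and is therefore an honest submodule; hence it suffices to establish the ascending chain condition for submodules of $H$. Next, given an ascending chain $S_1 \subseteq S_2 \subseteq \cdots$ of submodules of $H$, I would form its union $N = \bigcup_{n \in \N} S_n$, which is again a submodule of $H$ because the chain is directed. The crucial ingredient is that every submodule of a finitely generated free module over a PID is itself finitely generated; this is exactly the kind of statement from~\cite{jacobson2012} already invoked in the proof of Lemma~\ref{lem:hankel} (a submodule is free of rank bounded by that of the ambient module, in particular finitely generated), and it also follows from the fact that a PID is a Noetherian ring. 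Applying this to $N$, fix finitely many generators $g_1, \dots, g_k$ of $N$; each $g_j$ lies in some $S_{i_j}$, so for $n = \max\{i_1, \dots, i_k\}$ we get $N = \langle g_1, \dots, g_k \rangle \subseteq S_n \subseteq N$, whence $S_n = N$ and $S_m = S_n$ for all $m \ge n$.

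I do not expect a serious obstacle, since this is essentially the standard Noetherian argument; the only point that needs a little care is citing the right module-theoretic input. In particular, a purely rank-based argument does not suffice: an ascending chain may have constant rank without stabilising (for example $2\Z \subsetneq \Z$, both of rank one over $\Z$), so one genuinely needs finite generation of \emph{every} submodule of $H$, not merely the rank bound from Lemma~\ref{lem:quotrank}. With that ingredient in hand, the union-of-the-chain argument closes the proof, and \Cref{thm:termination} together with Proposition~\ref{prop:pidprogress} then yields termination of Algorithm~\ref{alg:main} over PIDs.
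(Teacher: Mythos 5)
Your proposal is correct and follows essentially the same route as the paper's proof: take the union of the chain, observe that it is a submodule of the finite-rank Hankel matrix and hence finitely generated (free with a finite basis, by the same Jacobson-style fact used in Lemma~\ref{lem:hankel}), and conclude that the finitely many generators all lie in some finite stage, which therefore equals the union. Your extra remarks (subsemimodules over a ring are submodules, and rank boundedness alone would not suffice) are sensible refinements of the same argument rather than a different approach.
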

\begin{proof}
	Let $H$ be the Hankel matrix, which has finite rank by Lemma~\ref{lem:hankel}.
	If
	$$
		M_1 \subseteq M_2 \subseteq M_3 \subseteq \cdots
	$$
	is any chain of submodules of $H$, then
	$
		M = \bigcup_{i \in \N} M_i
	$
	is a submodule of $H$ and therefore also of finite rank~\cite{jacobson2012}.
	Let $B$ be a finite basis of $M$.
	There exists $n \in \N$ such that $B \subseteq M_n$, so $M_n = M$.
	\qed
\end{proof}

The last ingredient for the abstract algorithm is solvability of the semiring: the following fact provides a sufficient condition for a PID to be solvable.

\begin{proposition}[PID solvability]\label{prop:pidsolvable}
     A PID $\PID$  is solvable if all of its ring operations are computable and 
     if each element of $\PID$ can be {\em effectively} factorised into irreducible elements. 
\end{proposition}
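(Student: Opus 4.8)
The plan is to reduce the solvability of an arbitrary finite linear system $A\underline{x} = \underline{b}$ over $\PID$ to the computation of a Smith-normal-form decomposition of $A$, just as solvability over a field reduces to Gaussian elimination. Since $\PID$ is a commutative Noetherian domain there is no left/right distinction, and every matrix $A \in \PID^{m \times n}$ admits invertible $U \in \PID^{m\times m}$ and $V \in \PID^{n\times n}$ with $UAV = D$, where $D = \mathrm{diag}(d_1,\dots,d_r,0,\dots,0)$ (rectangular) and $d_1 \mid d_2 \mid \cdots \mid d_r$, all $d_i$ nonzero. The classical algorithm producing $U$, $V$, $D$ uses only ring arithmetic, the extraction of greatest common divisors together with Bézout coefficients, and exact division; it terminates because each non-trivial pivot-improvement step replaces the current pivot by a proper divisor, hence by an element with strictly fewer irreducible factors, which cannot go on forever, and the outer recursion shrinks the matrix. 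So it suffices to show that under the hypotheses each of these primitives is effective, and then to note that the normal form both decides solvability and exhibits a solution.

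Effective factorisation into irreducibles, combined with the standard fact that a ring with computable operations comes equipped with an effective enumeration of its elements and with decidable equality, yields the primitives. From the factorisations one decides the associateness relation on irreducibles (running in parallel a search for a Bézout identity $up + vq = 1$ and a search for a unit $w$ with $p = wq$, exactly one of which can succeed), hence decides divisibility, computes exact quotients by subtracting multiplicities, and computes greatest common divisors (and least common multiples) by taking minima, respectively maxima, of multiplicities. The one primitive not provided directly is the computation of Bézout coefficients: a PID need not be Euclidean, so the extended Euclidean algorithm is unavailable. To obtain $u, v$ with $ua + vb = \gcd(a,b)$ we divide through by the computable gcd, reducing to coprime $a, b$ and the equation $ua + vb = 1$; a solution exists by the PID property, so it can be found by an unbounded search over $\PID \times \PID$, legitimate by enumerability and decidable equality. (One can avoid the search by inverting $a$ modulo $b$ through the Chinese remainder theorem over the prime-power factors of $b$ together with Hensel lifting, but this is not needed.)

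Finally, given the decomposition $UAV = D$, the substitution $\underline{y} = V^{-1}\underline{x}$, $\underline{c} = U\underline{b}$ turns $A\underline{x} = \underline{b}$ into the diagonal system $D\underline{y} = \underline{c}$, which is consistent exactly when $d_i \mid c_i$ for $i \le r$ and $c_i = 0$ for $i > r$ --- all decidable --- in which case one solution is $y_i = c_i/d_i$ for $i \le r$ and $y_i = 0$ otherwise, whence $\underline{x} = V\underline{y}$; if the system is inconsistent, report that no solution exists. This computes a solution to any finite linear system over $\PID$ or certifies its non-existence, so $\PID$ is solvable. I expect the computation of Bézout coefficients to be the main obstacle: every other ingredient is either routine linear algebra over a PID or an immediate consequence of effective factorisation, whereas there one has to work around the absence of a Euclidean structure, and it is precisely there that the two hypotheses of the proposition are put to use.
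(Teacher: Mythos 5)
Your proposal is correct and takes essentially the same route as the paper: reduce the system $A\underline{x}=\underline{b}$ to a Smith-normal-form computation over $\PID$, with the computable ring operations and effective factorisation supplying the gcd-related primitives, and then read off solvability and a solution from the diagonal system (the paper delegates these effectivity details to the cited algorithm of Jacobson). Your explicit handling of B\'ezout coefficients by search in the coprime case fills in a step the paper only gestures at, but it does not change the overall argument.
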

\begin{proof}
It is well-known that a system of equations  of the form   $A \underline{x} = \underline{b}$
with integer coefficients can be efficiently solved via computing the Smith normal form~\cite{smit61:onsy} of $A$. 
	The algorithm generalises to principal ideal domains, if we assume that the factorisation 
	of any given element of the principal ideal domain\footnote{Note that factorisations exist as each principal ideal domain 
	is also a unique factorisation domain, cf.~e.g.~\cite[Thm.~2.23]{jacobson2012}.}
	into irreducible elements is computable, cf. the algorithm in~\cite[p.~79-84]{jacobson1953}. To see that all steps in this algorithm
	can be computed, one has to keep in mind that the factorisation can be used to determine the greatest common divisor
	of any two elements of the principal ideal domain.
	\qed
\end{proof}
\begin{remark}
 In the case that we are dealing with an Euclidean domain $\PID$, a sufficient condition for $\PID$ to be solvable is that Euclidean division is computable (again this can be deduced from inspecting the algorithm in~\cite[p.~79-84]{jacobson1953}). 
 Such a PID behaves essentially like the ring of integers.
\end{remark}

Putting everything together, we obtain the main result of this section.

\begin{theorem}[Termination for PIDs]\label{thm:pids}
	Algorithm~\ref{alg:main} can be instantiated and terminates for any language accepted by a WFA over a PID of which all ring operations are computable and of which each element can be effectively factorised into irreducible elements.
\end{theorem}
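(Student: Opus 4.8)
The plan is to assemble the statement from the ingredients already established in this section, using the general termination result of Theorem~\ref{thm:termination} as the backbone. First I would argue that the algorithm can be \emph{instantiated}: the hypotheses imposed on the PID---computable ring operations and effective factorisation into irreducible elements---are exactly those of Proposition~\ref{prop:pidsolvable}, so the PID is solvable. By Proposition~\ref{prop:strategy}, solvability is equivalent to the existence of a closedness strategy for every language accepted by a WFA over the semiring; since the target language $\lang$ is, by assumption, accepted by such a WFA, we obtain a family $(\cs_{(S,E)})_{(S,E) \in \tables}$ of \emph{computable} functions satisfying Definition~\ref{def:closedness-strategy} (computability being guaranteed precisely because solvability asks for a computable solution of $A\underline{x} = \underline{b}$). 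This closedness strategy is the only parameter Algorithm~\ref{alg:main} requires, so the algorithm is well-defined on this input.

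Next I would verify the two hypotheses of Theorem~\ref{thm:termination} for $\lang$. A progress measure is supplied by Proposition~\ref{prop:pidprogress}: the function $\size(S,E) = \rank(M)$, with $M$ the module generated by the table $(S,E)$. The ascending chain condition for the Hankel matrix of $\lang$ is supplied by Proposition~\ref{prop:pidacc}. Both of these rest on Lemma~\ref{lem:hankel}, which tells us that the Hankel matrix of a language accepted by a WFA over a PID is free of finite rank and that this rank bounds the rank of every table-generated module. With a progress measure in hand and the ascending chain condition satisfied, Theorem~\ref{thm:termination} yields termination of Algorithm~\ref{alg:main} directly.

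The proof is therefore essentially a chaining of Propositions~\ref{prop:pidsolvable}, \ref{prop:strategy}, \ref{prop:pidprogress}, \ref{prop:pidacc} and Theorem~\ref{thm:termination}, and I would write it as one short paragraph. The only point I would take care over---and the closest thing to an obstacle---is lining up the quantifiers: Proposition~\ref{prop:strategy} gives a closedness strategy for \emph{every} language accepted by some WFA over the (solvable) PID, and Propositions~\ref{prop:pidprogress} and~\ref{prop:pidacc} likewise quantify over such languages, so the theorem's single hypothesis ``$\lang$ is accepted by a WFA over the PID'' is exactly what feeds all three ingredients simultaneously. No genuinely new argument is needed beyond this bookkeeping, so I would keep the proof to a couple of sentences and let the earlier lemmas carry the technical weight.
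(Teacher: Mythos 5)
Your proposal is correct and follows essentially the same route as the paper's own proof: solvability from Proposition~\ref{prop:pidsolvable}, a closedness strategy via Proposition~\ref{prop:strategy}, the progress measure from Proposition~\ref{prop:pidprogress}, the ascending chain condition from Proposition~\ref{prop:pidacc}, and then Theorem~\ref{thm:termination}. The quantifier bookkeeping you mention is handled implicitly in the paper, so nothing is missing.
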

\begin{proof}
	To instantiate the algorithm, we need a closedness strategy.
	According to Proposition~\ref{prop:strategy} it is sufficient for the PID to be solvable, which is shown by Proposition~\ref{prop:pidsolvable}.
	Proposition~\ref{prop:pidprogress} provides a progress measure, and we know from Proposition~\ref{prop:pidacc} that the Hankel matrix satisfies the ascending chain condition, so by Theorem~\ref{thm:termination} the algorithm terminates.
	\qed
\end{proof}

The example run given in Section~\ref{sec:execution_example_original} is the same when performed over the integers.
We note that if the teacher holds an automaton model of the correct language, equivalence queries are decidable by lifting the embedding of the PID into its \emph{quotient field} to the level of WFAs and checking equivalence there.

\section{Discussion}\label{sec:discussion}

We have introduced a general algorithm for learning WFAs over arbitrary semirings, together with sufficient conditions for termination.
We have shown an inherent termination issue over the natural numbers and proved termination for PIDs.
Our work extends the results by Bergadano and Varricchio~\cite{Bergadano}, who showed that WFAs over fields could be learned from a teacher. 
Although we note that a PID can be embedded into its corresponding field of fractions, the WFAs produced when learning over the field potentially have weights outside the PID.

Algorithmic issues with WFAs over arbitrary semirings have been identified before.
For instance, Krob~\cite{krob1994} showed that language equivalence is undecidable for WFAs over the tropical semiring.

On the technical level, a variation on WFAs is given by probabilistic automata, where transitions point to convex rather than linear combinations of states.
One easily adapts the example from Section~\ref{sec:naturals} to show that learning probabilistic automata has a similar termination issue.
On the positive side, Tappler et al.~\cite{TapplerA0EL19} have shown that deterministic MDPs can be learned using an \LStar\ 
based algorithm. The deterministic MDPs in {\em loc.cit.} are very different from the automata in our paper, as their states generate observable output
that allows to identify the current state based on the generated input-output sequence. 

One drawback of the ascending chain condition on the Hankel matrix is that this does not give any indication of the number of steps the algorithm requires.
Indeed, the submodule chains traversed, although converging, may be arbitrarily long.
We would like to measure and bound the progress made when fixing closedness defects, but this turns out to be challenging for PIDs.
The rank of the module generated by the table may not increase.
We leave an investigation of alternative measures to future work.

We would also like to adapt the algorithm so that for PIDs it always produces minimal automata.
At the moment this is already the case for fields,\footnote{%
 There is one exception: the language that assigns 0 to every word,
 which is accepted by a WFA with no states. The algorithm initialises the set of row labels,
 which constitute the state space of the hypothesis, with the empty word. 
%	This second condition exists because we initialise the set of row labels, which constitute the state space of the hypothesis, with the empty word; the empty language can be accepted by a WFA with no states.
} since adding a row due to a closedness defect preserves linear independence of the image of $\row$.
For PIDs things are more complicated---adding rows towards closedness may break linear independence and thus a basis needs to be found in $\row^\sharp$.
This complicates the construction of the hypothesis.

Our results show that, on the one hand, WFAs can be learned over finite semirings and arbitrary PIDs (assuming computability of the relevant operations) and, on the other hand, that there exists an infinite commutative semiring for which they cannot be learned.
However, there are many classes of semirings in between commutative semirings and PIDs, of which we would like to know whether their WFAs can be learned by our general algorithm.

Finally, we would like to generalise our results to extend the framework introduced in~\cite{vanheerdt2017}, which focusses on learning automata with side-effects over a monad.
WFAs as considered in the present paper are an instance of those, where the monad is the free semimodule monad $\free(-)$.
At the moment, the results in~\cite{vanheerdt2017} apply to a monad that preserves finite sets, but much of our general WFA learning algorithm and termination argument can be extended to that setting.
It would be interesting to see if crucial properties of PIDs that lead to a progress measure and to satisfying the ascending chain condition could also be translated to the monad level.

\paragraph{Acknowledgments.} We thank Joshua Moerman for comments and discussions. 

\pagestyle{plain}

\bibliographystyle{plain}
\bibliography{main}

%%%%% To display Open Access text and logo, Please add below text and copy the 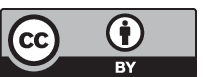 in the manuscript package %%%

%\vfill

{\small\medskip\noindent{\bf Open Access} This chapter is licensed under the terms of the Creative Commons\break Attribution 4.0 International License (\url{http://creativecommons.org/licenses/by/4.0/}), which permits use, sharing, adaptation, distribution and reproduction in any medium or format, as long as you give appropriate credit to the original author(s) and the source, provide a link to the Creative Commons license and indicate if changes were made.}

{\small \spaceskip .28em plus .1em minus .1em The images or other third party material in this chapter are included in the chapter's Creative Commons license, unless indicated otherwise in a credit line to the material.~If material is not included in the chapter's Creative Commons license and your intended\break use is not permitted by statutory regulation or exceeds the permitted use, you will need to obtain permission directly from the copyright holder.}

\medskip\noindent\includegraphics{cc_by_4-0.eps}

\end{document}